\newtheorem{definition}{Definition}[section]
\newtheorem{theorem}[definition]{Theorem}
\newtheorem{lemma}[definition]{Lemma}
\newtheorem{proposition}[definition]{Proposition}
\newtheorem{remark}[definition]{Remark}
\newtheorem{problem}[definition]{Problem}
\newtheorem*{main}{Main Result}
\newcommand{\bs}[1]{\boldsymbol{#1}}
\newcommand{\im}{\bs{\rm i}}
\newcommand{\spann}{{\rm span}}
\newcommand{\bra}{\langle}
\newcommand{\ket}{\rangle}
\numberwithin{equation}{section}
\title{{\Large {\bf Quantum Search on Simplicial Complexes
}
}}
\author{ 
Kaname Matsue$^{1}$ 
\footnote{ 
{\tt kmatsue@imi.kyushu-u.ac.jp} 
},\quad
Osamu Ogurisu$^{2}$ 
\footnote{
{\tt ogurisu@staff.kanazawa-u.ac.jp}  
}\quad
and
Etsuo Segawa$^{3}$ 
\footnote{
{\tt e-segawa@m.tohoku.ac.jp}
}
\\
{\scriptsize $^1$ 
Institute of Mathematics for Industry/International Institute for Carbon-Neutral Energy Research (WPI-I$^2$CNER), Kyushu University, 
}\\
{\scriptsize 
Fukuoka 819-0395, Japan 
} \\
{\scriptsize $^2$ 
Division of Mathematical and Physical Sciences, Kanazawa University 
}\\
{\scriptsize 
Kanazawa, Ishikawa 920-1192, Japan
} \\
{\scriptsize $^3$ 
Graduate School of Information Sciences, Tohoku University,
}\\
{\scriptsize 
Aoba, Sendai 980-8579, Japan
} \\
}
\date{\empty }
\begin{document}
\maketitle

\par\noindent
\begin{small}
\par\noindent
{\bf Abstract}. 
In this paper, we propose an extension of quantum searches on graphs driven by quantum walks to simplicial complexes. 
To this end, we define a new quantum walk on simplicial complex which is an alternative of preceding studies by authors. 
We show that the quantum search on the specific simplicial complex corresponding to the triangulation of $n$-dimensional unit square 
driven by this new simplicial quantum walk works well, 
namely, a marked simplex can be found with probability $1+o(1)$ within a time $O(\sqrt{N})$, where $N$ is the number of simplices with the dimension of marked simplex.

\footnote[0]{
{\it Key words and phrases.} 
Quantum walks, Quantum Search, Simplicial complexes, Unitary equivalence of quantum walks
}

\end{small}

\setcounter{equation}{0}

%
%
\section{Introduction}
The quantum walk is a quantum analogue of classical random walks \cite{Gu}. 
Its primitive form of the discrete-time quantum walk on $\mathbb{Z}$ can be seen in Feynman's checker board \cite{FH}. 
It is mathematically shown (e.g. \cite{K1}) from a combinatorial and probabilistic approach 
that this quantum walk has a completely different limiting behavior from random walks, 
which is a typical example showing a difficulty of intuitive description of quantum walks' behavior from a classical process.
By such an interesting observation and also the efficiency of quantum walks in quantum search algorithms (see \cite{A1, K2} and their references), 
quantum walks are studied from various kinds of viewpoints such as not only the quantum information and mathematics, but also 
a quantum simulation of physical process derived from the Dirac and Schr\"{o}dinger equations, experimental and engineering viewpoints, and so on.
\par

The time evolution of the discrete-time quantum walk is given by discrete iterations of a unitary operator on $\ell^2$-summable Hilbert space generated by arcs of a given graph. 
The unitary operator $U$ is determined by local unitary operators assigned at all vertices. 
Let $\psi_n$ be the $n$-th iteration of the discrete-time, that is, $\psi_n=U\psi_{n-1}$. 
Due to the unitarity of the time evolution, we can obtain a probability distribution $\mu_n$ from each time iteration of this walk; 
that is, we can define a map $\psi_n\mapsto \mu_n$. 
We call this map a measurement. We are interested in the sequence of $\{\mu_n\}_n$.

One of the interesting research direction is to explore how topological features of underlying objects affect asymptotics of the probability distribution $\mu_n$. 
For example, in infinite graph cases, it is shown that a homological structure~\cite{HKSS2} and also existence of finite energy flow~\cite{HS} of graphs provide localization of the Grover walk. 
An infinite abelian covering provides the linear spreading which is quadratically faster than diffusive spreading (e.g., \cite{HKSS2}). 
A sensitivity of quantum walks to boundaries of graphs is one of the main stream of quantum walk's study from the viewpoint of topological phases, for example~\cite{AE, CGSVWW, KRBD, OK}.  
For finite graph cases, estimations of effectiveness of quantum searches on graphs are one of the main topics~\cite{ADFP2012, ADMP2010, P2013, S2008, SKW2003}.
For example, 
search algorithms of the target vertices are considered on several graphs such as finite $d$-dimensional grid~\cite{AKR}, hypercubes \cite{SKW2003}, honeycomb network \cite{ADMP2010}, triangular lattice \cite{ADFP2012} and other graphs like Johnson graphs \cite{A2}.
To extract the graph structures which accomplish the quantum speed up and also perfect state transfer~\cite{SS} is one of the interesting inverse problem. 
As another interesting property of quantum walk, a graph centrality induced by quantum walks is also proposed by~\cite{Jingbo}. 
A classification of graphs from the viewpoint of the periodicity of quantum walks also has recently proceeded~\cite{HKSS1,Yoshie}.

%
\par
We believe that quantum walks can be defined on objects with mathematically richer structures than graphs in topological features, which is our main motivation of this study. 
Authors have introduced quantum walks on {\em simplicial complexes} (\cite{MOSver1}), which are higher-dimensional extension of graphs.
Based on the structure of Szegedy-type walks on graphs, coin operators and shift operators on simplicial complexes are introduced to define unitary operators on them, which are referred to as {\em simplicial quantum walks}. 
Unlike graphs, simplicial complexes admit not only connectivity and rings but also cavities, twists and more general topological features. 
In \cite{MOSver1}, numerical studies have shown that, keeping a fundamental feature such as ballistic spreading of walkers, simplicial quantum walks have responses to topology of simplicial complexes, such as localization and presence of nontrivial homology: algebraic description of holes in simplicial complexes, hierarchy of localizations with respect to the order of homology, and sensitivity of orientations on simplicial complexes. 
These features indicate that quantum walks have rich response with respect to topology of underlying geometric objects including graphs and simplicial complexes.
\par
\bigskip
Our aim here is to provide a quantum search algorithm over simplicial complexes in terms of simplicial quantum walks.
We believe that such an extension will build a bridge between quantum search procedures and various knowledge of topology and geometry, as well as a bridge between quantum walks and the latter.
As the starting point, we consider quantum search on the unit sphere $S^n$ (not embedded graphs, but the complex which is topologically identical to $S^n$), which are topologically different from Euclidean spaces and torus.
This is a good example of a series of studies since spheres are considered as ones of the simplest geometric objects.
On the other hand, simplicial quantum walks introduced in \cite{MOSver1} requires a lot of coin states on each simplex. 
In other words, as for simplicial quantum walks on an $n$-dimensional simplicial complex $\mathcal{K}$, the dimension of total spaces of quantum walks is proportional to $(n+1)!$, which will be costly for practical computations of quantum walks.
To make the situation simpler, we introduce an alternative version of simplicial quantum walks.
The new version pays attention to {\em orientations of simplices} and structure of unitary operators {\em based on the composite of Grover operators, rather than the composite of coin and shift operators}.
These features well match those of bipartite walks on bipartite graphs originated by Szegedy (e.g., \cite{Sze}) and coined quantum walks driven by Grover operators.
Indeed, we prove that the new version of simplicial quantum walks are unitary equivalent to these quantum walks on corresponding graphs, under suitable constraints on unitary operators and simplicial complexes (Section \ref{section-SQW}).
Such equivalence build bridges between simplicial quantum walks and quantum walks on graphs, including quantum search problems.
They also give a new insight of coined walks on special class of graphs from the viewpoint of quantum walk models on simplicial complexes.
\par
Our main result in this paper is the following whose details are discussed in successive sections.
\begin{main}[Theorem \ref{thm-search}]
For the $n(\geq 2)$-dimensional simplicial complex $\mathcal{K}$ as a triangulation\footnote{
Triangulation of a manifold or a surface means a simplicial complex $\mathcal{K}$ whose geometric realization; the union of all simplices in $\mathcal{K}$, is homeomorphic (topologically identical) to the manifold.
In the case of our statement, \lq\lq manifold" is a unit sphere $S^n$.
}
of the unit sphere $S^n$ (details are discussed in Section \ref{section-search}), fix an $(n-1)$-simplex $\tau_\ast \in \mathcal{K}$ as a marked simplex. 
Then the \lq\lq quantum search" driven by our new quantum walk on $\mathcal{K}$ finds $\tau_\ast$ within a time $t_f=O(n) = O(\sqrt{N})$ with probability $p_f\sim 1$, where $N$ is the number of $(n-1)$-simplices in $\mathcal{K}$. 
\end{main}
It turns out that the simplicial complex $\mathcal{K}$ possesses $O(N) = O(n^2)$ $(n-1)$-simplices, which immediately follows from the construction.
The essence of the result is therefore that the above \lq\lq quantum search" on $\mathcal{K}$ finds a marked simplex among $N$ entries with time complexity $O(\sqrt{N})$, which shows that our equipments achieve the quantum speed-up for search problems over simplicial complexes.
\par
\bigskip
The rest of this paper is organized as follows.
In Section \ref{section-SQW}, we define an alternative version of simplicial quantum walks discussed in \cite{MOSver1}\footnote{
Very recently, Luo and Tate \cite{LT2017} introduces an alternative form of quantum walks on simplicial complexes with a different motivation from our present study.
}.
The new version of quantum walks reflects information of orientations on simplices and reduces the number of states on them compared with the previous version in \cite{MOSver1}.
Moreover, these quantum walks turn out to be unitary equivalent to a class of quantum walks on graphs.
We also show that the equivalence of simplicial quantum walks on orientable simplicial complexes without boundary is realized by quantum walks on associated graphs with duplication structure, which simplifies the description of dynamics.
In particular, as the original simplicial quantum walks, our alternative walks also rely on geometry of simplicial complexes in terms of associated graph structures.
In Section \ref{section-search}, we consider the quantum search problem for simplicial quantum walks, which we shall call \lq\lq {\em simplicial quantum search}" problem, and prove the main result.
We also show quantum search with numerical simulations for demonstrating our main result in concrete situations.
In Appendix, fundamentals of simplicial complexes which require for our discussions, as well as the detailed proofs of spectral arguments in quantum search problems are collected.

%
%
\section{Simplicial quantum walks}
\label{section-SQW}

In this section, we define a quantum walk model on simplicial complexes, called a {\em simplicial quantum walk}.
This model is a higher dimensional analogue of quantum walks on graphs, and it is an alternative of the model which are defined in preceding work \cite{MOSver1}.
In this version, we pay attention to {\em orientability} on simplices, which reduces the number of states on each simplex and induces a well-defined quantum walk model.
Moreover, we show that the new model is unitary equivalent to a class of quantum walks on graphs.
The equivalence yields prospects of studies of simplicial quantum walks from the viewpoint of quantum walks on graphs including quantum search.
\par
The fundamental notions of simplicial complexes are listed in Appendix \ref{appendix-cpx} and hence readers who are not familiar with simplicial complexes can access basic knowledge and our requirements there.

\subsection{Setting and definition}
\label{section-setting}
Let $\mathcal{K}$ be an $n$-dimensional simplicial complex with $n\geq 2$.
We assume that $\mathcal{K}$ is strongly connected. 
Let $\mathcal{K}_{\kappa}$ be the collection of $\kappa$-dimensional simplices of $\mathcal{K}$ ($\kappa=0,\dots,n$). 
An element of $\mathcal{K}_{\kappa}$ is denoted by $|w_0w_1\dots w_\kappa|$. 
Here $|w_j|$ is a $0$-simplex, that is, a vertex. 
The order of $w_j$'s in $\mathcal{K}_{\kappa}$ is ignored, that is, $|w_0w_1\dots w_\kappa|=|w_{\pi^{-1}(0)}w_{\pi^{-1}(1)}\dots w_{\pi^{-1}(\kappa)}|$ 
for any $\pi\in S_{\kappa+1}$. Here $S_{\kappa+1}$ is the symmetric group on $\kappa+1$ letters. 
Define $\tilde {\mathcal{K}_\kappa}$ as the set of ordered sequences induced by $\mathcal{K}_\kappa$: $\tilde {\mathcal{K}_\kappa} := \{ (w_0,\dots,w_\kappa) \;|\; |w_0\dots w_\kappa| \in \mathcal{K}_{\kappa}\}$. 
Let \lq\lq $\sim_\kappa$\rq\rq  be the following equivalent relation on $\tilde {\mathcal{K}_\kappa}$: 
for two ordered sequences $\tilde w, \tilde w'\in \tilde {\mathcal{K}_\kappa}$, $\tilde w\sim_\kappa \tilde w'$ if and only if 
there exists an even permutation $\pi\in A_{\kappa+1}$ such that $\pi(\tilde w)\equiv \langle \tilde w_{\pi(0)} \cdots \tilde w_{\pi(n)}\rangle= \tilde w'$. 
Here $A_{\kappa+1}\subset S_{\kappa+1}$ is the alternating group on $\kappa+1$ letters. 
The quotient set $\langle \mathcal{K}_\kappa \rangle:= \tilde{\mathcal{K}_\kappa}/{\sim_\kappa}$ is denoted by 
$\langle \mathcal{K}_\kappa \rangle=\{\langle w_0w_1\dots w_{\kappa} \rangle \;|\; ( w_0,w_1,\dots, w_{\kappa})\in \tilde{\mathcal{K}_{\kappa}}\}$. 
\begin{definition}\rm
For $\sigma=\langle w_0\cdots w_{n} \rangle\in \langle \mathcal{K}_n\rangle$ and 
$\tau=\langle w_0'\cdots w_{n-1}' \rangle\in \langle \mathcal{K}_{n-1}\rangle$, we define 
$\sigma \triangleright \tau$ if and only if there exists $\pi\in A_{n+1}$ such that 
$\tau=\langle w_{\pi^{-1}(1)}\cdots w_{\pi^{-1}(n)}\rangle$. 
We call such $\tau$ an induced directed primary face of $\sigma$. 
\end{definition}
\begin{proposition}
\label{propwelldefine}
The induced directed primary face of $\sigma$ is well defined, 
that is, this is independent of the choice of representatives. 
\end{proposition}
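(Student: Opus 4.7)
The plan is to verify two independence statements that together show the induced directed primary face depends only on the equivalence class $\sigma \in \langle \mathcal{K}_n \rangle$, not on the chosen representative $(w_0,\ldots,w_n)$ of $\sigma$ nor on the specific $\pi \in A_{n+1}$ appearing in the definition.

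First, I would fix a representative $(w_0,\ldots,w_n)$ of $\sigma$ and analyze how the sequence $(w_{\pi^{-1}(1)},\ldots,w_{\pi^{-1}(n)})$ depends on $\pi \in A_{n+1}$. Only the value $i = \pi^{-1}(0)$ determines which vertex is ``dropped'', so the key sub-step is to compare two permutations $\pi_1,\pi_2 \in A_{n+1}$ with $\pi_1^{-1}(0) = \pi_2^{-1}(0) = i$. Their product $\pi_1 \pi_2^{-1}$ then fixes $0$, so it can be viewed as a permutation of $\{1,\ldots,n\}$. Adjoining a fixed point does not alter cycle parity, hence the parity of $\pi_1 \pi_2^{-1}$ in $S_{n+1}$ coincides with its parity in $S_n$; since $\pi_1,\pi_2 \in A_{n+1}$ the product is even, so its restriction to $\{1,\ldots,n\}$ lies in $A_n$. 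This exhibits $(w_{\pi_1^{-1}(1)},\ldots,w_{\pi_1^{-1}(n)})$ and $(w_{\pi_2^{-1}(1)},\ldots,w_{\pi_2^{-1}(n)})$ as related by an even permutation, so they represent the same element of $\langle \mathcal{K}_{n-1}\rangle$.

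Second, I would check independence from the representative of $\sigma$ itself. If $(w'_0,\ldots,w'_n) = (w_{\rho^{-1}(0)},\ldots,w_{\rho^{-1}(n)})$ is another representative with $\rho \in A_{n+1}$, then for any $\pi' \in A_{n+1}$ the face produced from the primed representative is
\[
\langle w'_{\pi'^{-1}(1)} \cdots w'_{\pi'^{-1}(n)}\rangle = \langle w_{(\rho \pi')^{-1}(1)} \cdots w_{(\rho \pi')^{-1}(n)}\rangle.
\]
Since the map $\pi' \mapsto \rho \pi'$ is a bijection of $A_{n+1}$ onto itself, the set of induced directed primary faces computed from the primed representative coincides with the set computed from the original one. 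Combined with the previous step, this establishes well-definedness of the relation $\sigma \triangleright \tau$ and of the resulting $\tau \in \langle \mathcal{K}_{n-1}\rangle$.

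The main potential obstacle is purely bookkeeping with parities under restriction and composition; the decisive observation is that restricting an even permutation on $\{0,1,\ldots,n\}$ that fixes $0$ yields an even permutation on $\{1,\ldots,n\}$, which is what allows statements about $A_{n+1}$ to translate cleanly to statements about $A_n$ acting on the face vertices.
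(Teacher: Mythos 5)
Your proof is correct and follows essentially the same route as the paper's: the decisive point in both is that two even permutations of $\{0,1,\dots,n\}$ sending the same index to position $0$ differ by an even permutation fixing $0$, whose restriction to the remaining $n$ positions is again even, so the truncated sequences are $A_n$-equivalent. Your organization is slightly cleaner (working directly with $\pi_1\pi_2^{-1}$ instead of factoring through a cyclic permutation and splitting into parity cases, and making explicit the reparametrization argument for a change of representative of $\sigma$, which the paper leaves implicit), but the underlying idea is identical.
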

\begin{proof}
For $n=1$, when $\sigma=\bra w_0w_1\ket$, the induced directed primary face is uniquely determined as $\tau=\bra w_1\ket$ since the even permutation
is only the identity operator. 
On the other hand, for $n\geq 2$, remark that for every $i\in \{0,1,\dots,n\}$, there exists $\pi\in A_{n+1}$ such that $\pi^{-1}(0)=i$. 
Let $\mu,\nu\in A_{n+1}$ be such that $\mu^{-1}(0)=\nu^{-1}(0)=i$.
It holds
$\langle w_{i}w_{\mu^{-1}(1)}\cdots w_{\mu^{-1}(n)}\rangle=\langle w_{i}w_{\nu^{-1}(1)}\cdots w_{\nu^{-1}(n)}\rangle$. 
We will show that even if we remove $w_i$ from both sides of the above, 
	$\langle w_{\mu^{-1}(1)}\cdots w_{\mu^{-1}(n)}\rangle=\langle w_{\nu^{-1}(1)}\cdots w_{\nu^{-1}(n)}\rangle$ holds. 
We set the cyclic permutation $\pi=(0,1,\dots,j)(j+1)\cdots (n)\in S_{n+1}$.
There exist $\mu', \nu' \in S_{n+1}$ such that $\mu=\mu'\circ \pi$ and $\nu=\nu'\circ \pi$. 
Therefore $\tau(w_{i}w_{\mu^{-1}(1)}\cdots w_{\mu^{-1}(n)})=(w_{i}w_{\nu^{-1}(1)}\cdots w_{\nu^{-1}(n)})$, 
where $\tau=\nu'\circ {\mu'}^{-1}$. 
If $i$ is even, then $\pi$ is even permutation, which implies $\mu'$ and $\nu'$ should be even permutation since $\mu=\mu'\circ\pi$ 
and $\nu=\nu'\circ \pi$ are even permutations. 
Thus $\tau=\nu'\circ \mu'$ is an even permutation. 
On the other hand, if $i$ is odd, then $\pi$, $\mu$ and $\nu$ are odd permutations, which implies $\tau$ is also an even permutation. 
This completes the proof. 
\end{proof}
We have shown that we can obtain all the induced directed primary faces $\tau\in \bra \mathcal{K}_{n-1} \ket$ of $\sigma\in \bra \mathcal{K}_n \ket$
by removing the initial letter of $\pi(\sigma)$ for arbitrary even permutation $\pi\in A_{n+1}$ ($n\geq 1$). 
Note that the set of induced directed primary faces of $\sigma=\bra w_0\dots w_n \ket$ ($n\geq 1$) is expressed by 
\begin{align}\label{iipp}
	\begin{cases} 
        \{\bra w_{j+1}\dots w_{n}w_0\dots w_{j-1} \ket \;|\; j=0,\dots,n \} & \text{: $n$ is odd,}  \\
        \{\bra w_{j+1}\dots w_{n}w_0\dots w_{j-1} \ket \;|\; j=0,\dots,n \} & \text{: $n$ is even, $j$=even,} \\
        \{\bra w_{j-1}\dots w_{0}w_n\dots w_{j+1} \ket \;|\; j=0,\dots,n \} & \text{: $n$ is even, $j$=odd.}
	\end{cases}
\end{align}
It is easy to check that the above permutation is an even permutation. 

\begin{definition}\rm
For $\sigma=\langle w_0\cdots w_{n} \rangle\in \langle \mathcal{K}_n\rangle$ and 
$\tau=\langle w_0'\cdots w_{n-1}' \rangle\in \langle \mathcal{K}_{n-1}\rangle$ $(n\geq 1)$, we define 
$\sigma \triangleright \tau$ with $\sigma=\bra w_0\cdots,w_n \ket$ if and only if there exists $j\in \{0,\dots,n\}$ such that 
$\tau$ is equivalent to a representative element in (\ref{iipp}). 
We call such $\tau$ an improved induced directed primary face of $\sigma$. 
\end{definition}
\begin{remark}
The relation  \lq\lq \;$\triangleright$" determining improved induced primary faces is well defined by Proposition~\ref{propwelldefine}. 
Note that the above definition itself still makes sense for $n=1$.
\end{remark}

The stage on which our quantum walker moves is constructed by $\mathcal{K}^{n,n-1}\subset \bra \mathcal{K}_n \ket\times \bra \mathcal{K}_{n-1} \ket$. 
Here
\begin{equation*}
\mathcal{K}^{n,n-1}=\{ (\sigma,\tau)\in \bra \mathcal{K}_n \ket\times \bra \mathcal{K}_{n-1} \ket \;|\; \sigma\triangleright \tau\}.
\end{equation*}
We provide two kinds of equivalence relations on $\mathcal{K}^{n,n-1}$: 
\begin{equation*}
(\sigma,\tau) \stackrel{\pi_1}{\sim} (\sigma',\tau') \ \stackrel{def}{\Leftrightarrow} \  \tau=\tau',\quad 
(\sigma,\tau) \stackrel{\pi_2}{\sim} (\sigma',\tau') \ \stackrel{def}{\Leftrightarrow} \  \sigma=\sigma'. 
\end{equation*}
The quotient sets are expressed by
\begin{equation*}
        \mathcal{K}^{n,n-1}/\stackrel{\pi_1}{\sim} = \{ E_\tau \;|\; \tau\in \bra \mathcal{K}_{n-1} \ket \}, \quad 
        \mathcal{K}^{n,n-1}/\stackrel{\pi_2}{\sim} = \{ F_\sigma \;|\; \sigma\in \bra \mathcal{K}_{n} \ket \},
\end{equation*}	
where 
\begin{equation}
\label{set-E-F}
E_\tau = \{ (\sigma,\tau) \;|\; \forall \sigma,\;\sigma\triangleright \tau \}, \quad
F_\sigma = \{ (\sigma,\tau) \;|\; \forall \tau,\;\sigma\triangleright \tau \}.
\end{equation}
Now we are in the place to define our quantum walk model. 
\begin{definition}[Simplicial quantum walks, version 2]\rm
\label{dfn-SQW2}
The quantum walk on $n$-dimensional simplicial complex $\mathcal{K}$ ($n\geq 2$) is defined as follows. 
\begin{enumerate}
\item Total Hilbert space: $\ell^2(\mathcal{K}^{n,n-1})$. Here the inner product is the standard inner product such that
	\[ \langle \psi,\phi \rangle=\sum_{ \bs{\sigma} \in \mathcal{K}^{n,n-1}} \overline{\psi(\bs{\sigma})}{\phi(\bs{\sigma})}. \]
Using $\pi_1$ and $\pi_2$, we decompose $\ell^2(\mathcal{K}^{n,n-1})$ by 
\begin{equation*}
\ell^2(\mathcal{K}^{n,n-1})=\bigoplus_{\tau\in\bra \mathcal{K}_{n-1} \ket} \mathcal{E}_{\tau}=\bigoplus_{\sigma\in\bra K_{n} \ket} \mathcal{F}_{\sigma},
\end{equation*}
where
	\[ \mathcal{E}_\tau=\{\psi \;|\; \bs{\sigma}\notin E_\tau \Rightarrow \psi(\bs{\sigma})=0\},\;
           \mathcal{F}_\sigma=\{\psi \;|\; \bs{\sigma}\notin F_\sigma \Rightarrow \psi(\bs{\sigma})=0\}\]
\item Unitary time evolution: Set $\hat{E}_\tau$ and $\hat{F}_\sigma$ as local unitary operators on $\mathcal{E}_\tau$ and $\mathcal{F}_\sigma$. 
Put $\hat{E}=\bigoplus_{\tau\in \bra \mathcal{K}_{n-1} \ket}\hat{E}_\tau$ and $\hat{F}=\bigoplus_{\sigma\in \bra K_{n} \ket}\hat{F}_\sigma$. 
Then the time evolution operator is defined by 
	\[ \hat{U}=\hat{F}\circ \hat{E}. \]
We call $\hat{E}$ and $\hat{F}$ {\em the first and second unitary operators}, respectively. 
\item Distribution: letting  a unit initial state be $\psi_0\in \ell^2(\mathcal{K}^{n,n-1})$, for each natural number $k$, we define the distribution 
$\mu_k^{(\psi_0)}:  \mathcal{K}_{n-1} \to [0,1]$ by 
	\[ \mu_k^{(\psi_0)}(|\tau|)=\sum_{\bs{\sigma}\in E_{\tau'}, |\tau'| = |\tau|}|(\hat U^k\psi_0)(\bs{\sigma})|^2. \]
\end{enumerate}
We shall call the collection $(\hat U, \ell^2(\mathcal{K}^{n,n-1}), \mu_\ast^{(\cdot)})$, or simply $\hat U$, a {\em simplicial quantum walk} on $\mathcal{K}$.
We write the phrase \lq\lq simplicial quantum walk" by {\em SQW} for short.
In particular, the phrase {\em SQW2} means the SQW in the sense of the current definition.
Time evolution of SQW2 is illustrated in Figure \ref{fig-SQW}.
\end{definition}

\begin{figure}[htbp]\em
\begin{minipage}{0.32\hsize}
\centering
\includegraphics[width=6.0cm]{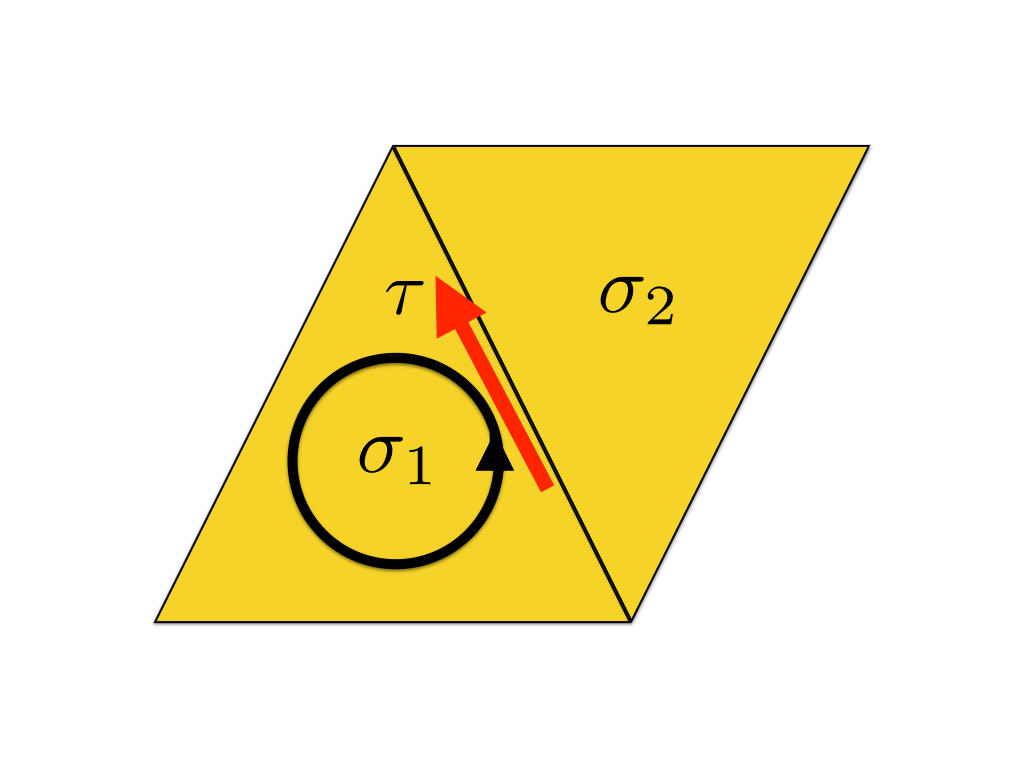}
(a)
\end{minipage}
\begin{minipage}{0.32\hsize}
\centering
\includegraphics[width=6.0cm]{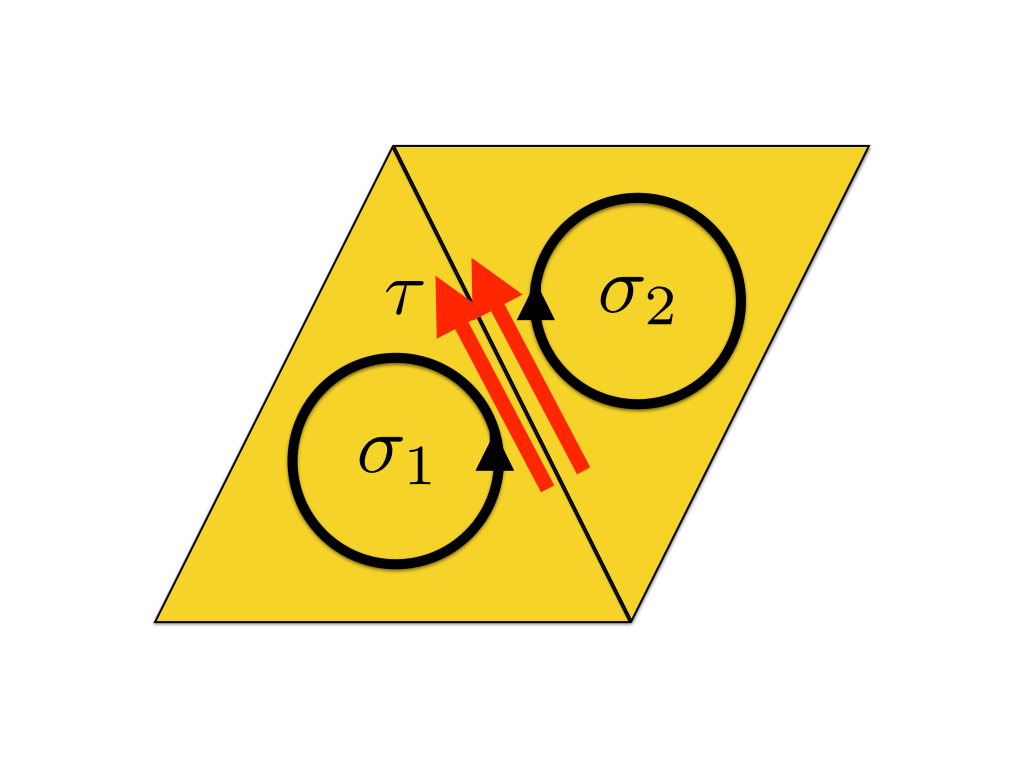}
(b)
\end{minipage}
\begin{minipage}{0.32\hsize}
\centering
\includegraphics[width=6.0cm]{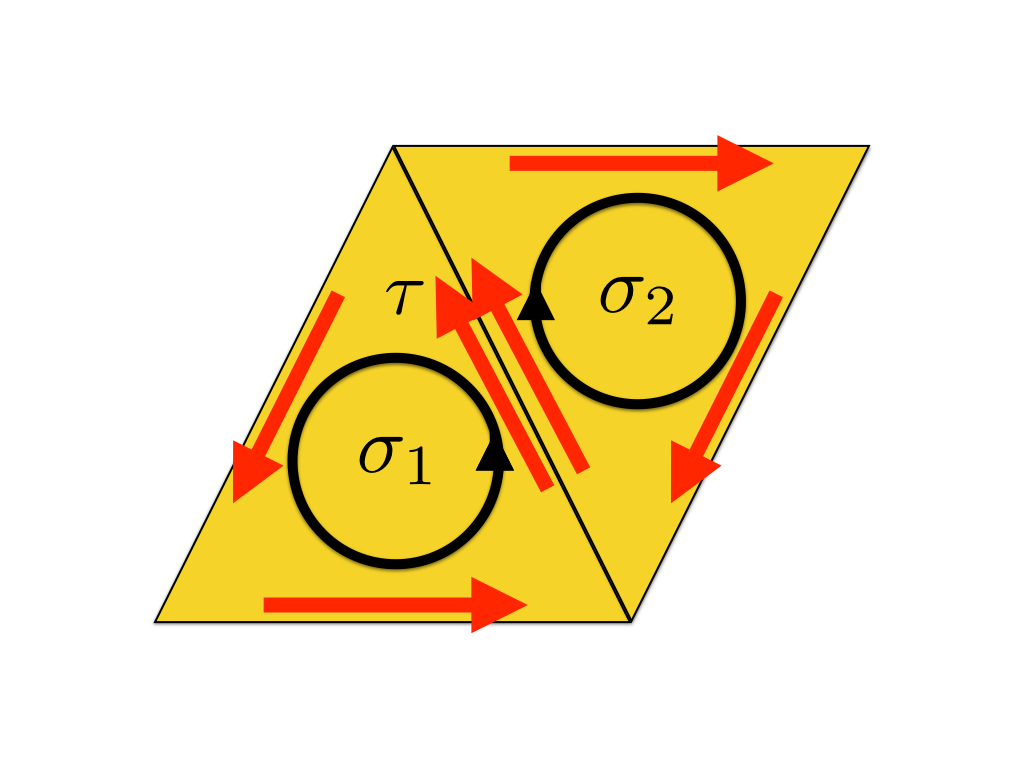}
(c)
\end{minipage}
\caption{Illustration of actions $\hat E$ and $\hat F$}
\label{fig-SQW}
Let $\mathcal{K}$ be the $2$-dimensional simplicial complex consisting of two triangles and their faces.
(a) : A state is assumed to be on the pair $(\sigma_1, \tau)\in \mathcal{K}^{2,1}$ of an oriented simplex $\sigma_1$ and a primary face $\tau$ with induced orientation.
(b) : Action of $\hat E$. 
The state transmits the adjacent simplex $\sigma_2$ whose orientation is uniquely determined so that the induced orientation on the primary face $|\sigma_1|\cap |\sigma_2|\equiv |\tau|$ is the same as $\tau$. 
(c) : Action of $\hat F$. 
States are diffused around simplices $\sigma_1$ and $\sigma_2$.
\end{figure}

In the current definition, the assumption $n\geq 2$ is crucial.
If no confusions arise, we assume that simplicial complexes have dimensions $n\geq 2$.
Several comments for the case $n=1$ are left to Remark \ref{rem-dim1}.
\begin{remark}\rm
\label{rem-comparison}
In the original version \cite{MOSver1}, SQW is defined by the procedure of Szegedy-type walks (e.g., \cite{HKSS2}), which is achieved by simple generalizations of coin and shift operators.
It makes sense even for $n=1$, namely, under several identifications of coin states, version 1 recovers Szegedy-type walks on graphs.
On the other hand, the construction indicates that the number of induced standard bases by each simplex (i.e., the dimension of {\it shift} space) is $(n+1)!$ for walks on $n$-dimensional simplicial complexes (note that the dimension of coin space for Szegedy walks on graphs is $2 = (1+1)!$).
Furthermore, the shift operator is a cyclic permutation on $n+1$ letters. 
These restrictions force us to spend quite high costs on mathematical and numerical studies of quantum walks.
\par
In the present version, the total space is constructed with attention to {\em orientations} on simplices and their faces, which reduce the dimension of coin spaces.
Comparing Figure \ref{fig-SQW} with Figure $1$ in \cite{MOSver1}, we can see that the evolution of $\hat U$ is similar to SQW in \cite{MOSver1}.
In the current case, the first and the second unitary operators $\hat E$ and $\hat F$ corresponding to coin and shift operators, respectively, can be chosen arbitrarily. 
In particular, these can be chosen to be involutions like Grover operators, in which sense SQW2 can be expected to be simplified from both mathematical and computational viewpoints.
Moreover, as shown below, SQW2 is unitary equivalent to quantum walks on graphs under appropriate choices of operators and simplicial complexes. 
%
These comparisons are listed in Table \ref{table-SQW}.
\end{remark}

\begin{table}[h]
\caption{Comparison with coined walks and SQWs on $\mathcal{K}$ with $\dim \mathcal{K} = n$}
\begin{center}
\begin{tabular}{|c|c|c|c|}
\hline
 & Coined walk & Version 1 \cite{MOSver1} & Version 2 \\
 & (e.g., \cite{HKSS2}) &  &  (Definition \ref{dfn-SQW2}) \\[2mm] \hline
total space & $\ell^2(A)$ & $\ell^2(\tilde K_n)$ & $\ell^2(\mathcal{K}^{n,n-1})$  \\
 & (on directed arcs) &  &   \\[2mm]
local coin & unitary on $\mathbb{C}^{\deg(u)}$ & a reflection operator &  unitary on $\mathcal{E}_\tau$  \\[2mm]
local shift & flip-flop & cyclic permutation  & unitary on $\mathcal{F}_\sigma$ \\
 	& & on $\{0,1,\cdots n\}$ & \\[2mm]
$\dim$(shift space) & $2$ & $(n+1)!$ & $2(n+1)$ \\[2mm]
$n=1$ & $-$ & coined walk & coined walk on \\
 & & under several identifications & double graph \\[2mm]
\hline
\end{tabular}
\end{center}
\label{table-SQW}
\end{table}%

As seen in the next subsections, the new version of simplicial quantum walks can be seen as quantum walks on corresponding graphs, which lets studies of dynamics of walkers much simpler.

\subsection{Equivalence to bipartite walks}

Here we show that our present simplicial quantum walks are interpreted as quantum walks on bipartite graphs, called bipartite walks. 
The bipartite walk is defined as follows. 
\begin{definition}[\cite{Sze}]\rm
Let $(X\sqcup Y,E)$ be a connected bipartite graph and $\ell^2(E)$ be the Hilbert space induced by $E$, where $\sqcup$ denotes the disjoint union of two sets. 
We decompose $\ell^2(E)$ into 
\begin{equation*}
\ell^2(E)=\bigoplus_{x\in X} {\mathcal{E}}'_x=\bigoplus_{y\in Y} {\mathcal{F}}'_y,
\end{equation*}
where 
\begin{equation*}
{\mathcal{E}}'_x=\{\psi\in \ell^2(E) \;|\; X(e)\neq x \Rightarrow \psi(e)=0\},\quad {\mathcal{F}}'_y=\{\psi\in \ell^2(E) \;|\; Y(e)\neq y \Rightarrow \psi(e)=0\}
\end{equation*}
and $X(e)$ and $Y(e)$ are the end vertices of $e$ in $X$ and $Y$, respectively. 
Setting local unitary operators ${\hat{E}}'_x$ and ${\hat{F}}'_y$ on ${\mathcal{E}}'_x$ and ${\mathcal{F}}'_y$, respectively,
we define the following unitary operator on $\ell^2(E)$:
\begin{equation*}
\hat{B}'= \left(\bigoplus_{y\in Y} {\hat{F}}'_y \right) \circ \left(\bigoplus_{x\in X} {\hat{E}}'_x \right).
\end{equation*}
We call the walk driven by $\hat B'$ a {\em bipartite walk} on $(X\sqcup Y, E)$. 
\end{definition}
Now we introduce a special bipartite graph induced by $\mathcal{K}^{n,n-1}$ as follows. 
\begin{definition}\rm
We define a bipartite graph $G_\cap = G_\cap (\mathcal{K}^{n,n-1})=(X_E\sqcup X_F, E(G_\cap))$ induced by $\mathcal{K}^{n,n-1}$ as follows. 
See (\ref{set-E-F}) for the definition of $E_\tau$ and $F_\sigma$.
\begin{align}
\notag
	X_E = \{E_\tau \;|\; \tau\in \bra \mathcal{K}_{n-1} \ket\},&\;X_F = \{F_\sigma \;|\; \sigma\in \bra \mathcal{K}_{n} \ket\} \\
\label{edge-def}
        E_\tau F_\sigma\in E(G_\cap) \quad \stackrel{def}{\Leftrightarrow} \quad & E_\tau\cap F_\sigma \neq \emptyset \;\mathrm{in}\; \mathcal{K}^{n,n-1}. 
\end{align}
We call this graph an {\em induced bipartite graph of $\mathcal{K}^{n,n-1}$}. 
An example of the graph $G_\cap (\mathcal{K}^{n,n-1})$ is illustrated in Figure \ref{fig-bipartite}.
\end{definition}

\begin{figure}[htbp]\em
\begin{minipage}{0.5\hsize}
\centering
\includegraphics[width=8.0cm]{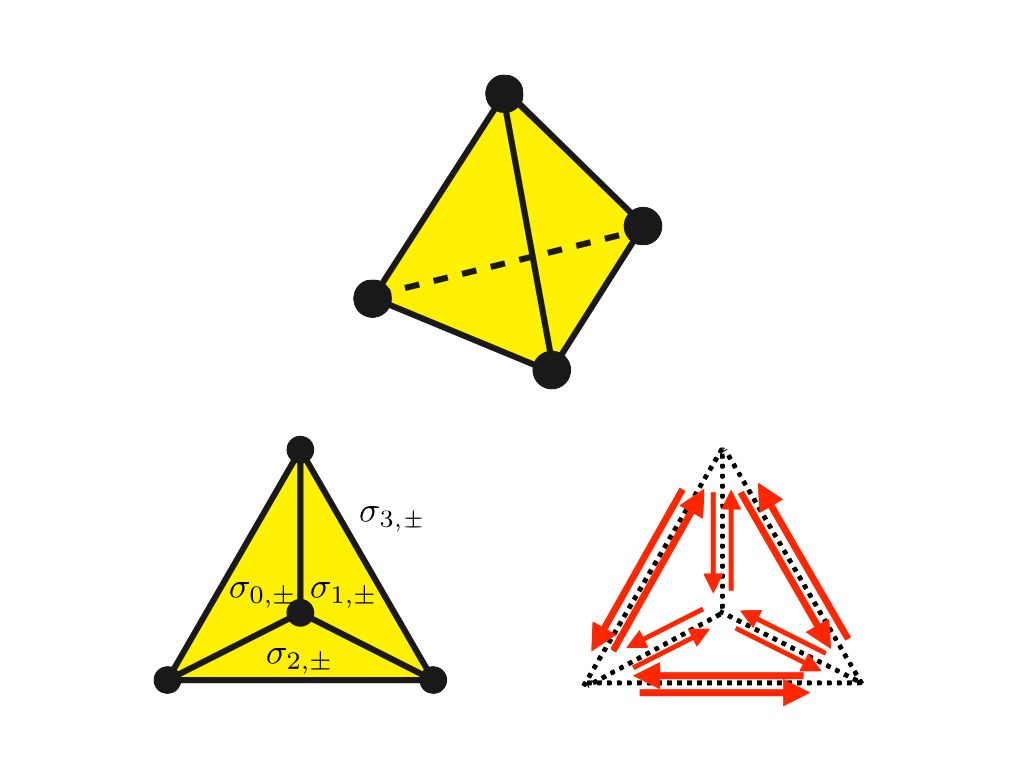}
(a)
\end{minipage}
\begin{minipage}{0.5\hsize}
\centering
\includegraphics[width=8.0cm]{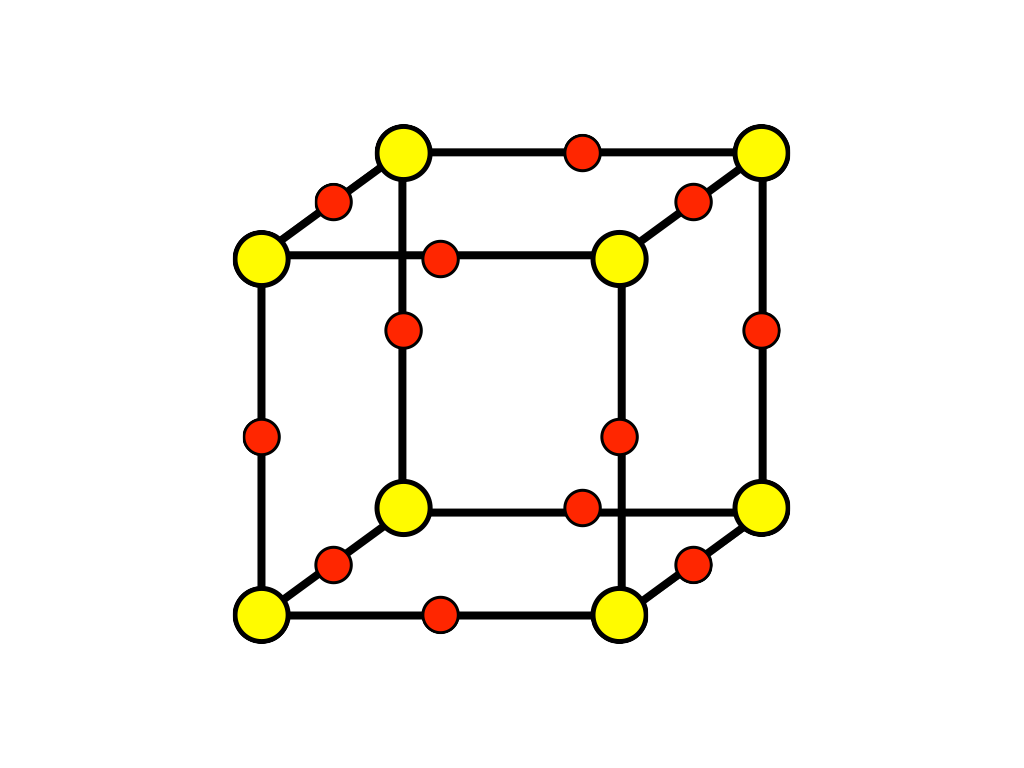}
(b)
\end{minipage}
\caption{Simplicial complex $\mathcal{K}$ and induced bipartite graphs $G_\cap (\mathcal{K}^{n,n-1})$}
\label{fig-bipartite}
(a) : A simplicial complex $\mathcal{K}$ describing a tetrahedron with a cavity. 
Eight oriented $2$-simplices and twelve oriented $1$-simplices are stored.
(b) : The induced bipartite graph $G_\cap (\mathcal{K}^{2,1})$.
Each vertex corresponds to either a $2$-simplex (yellow stored in $X_F$) or a $1$-simplex (red stored in $X_E$).
The connection is determined by the rule (\ref{edge-def}).
\end{figure}

Note that the connectivity of this graph is not ensured in general.
We can take a bijective map $\eta:\mathcal{K}^{n,n-1}\to E(G_\cap)$ such that
	\[ \eta((\sigma,\tau))=F_\sigma E_\tau.  \]
The inverse map is 
	\[ \eta^{-1}(F_{\sigma}E_{\tau})=(\sigma,\tau). \]
Let $\ell^2(E(G_\cap))$ be the $\ell^2$-summable Hilbert space. 
Using the unitary map $\mathcal{U}_\eta: \ell^2(\mathcal{K}^{n,n-1})\to \ell^2(E(G_\cap))$ defined by
\begin{equation*}
(\mathcal{U}_\eta\psi)(E_{\tau}F_{\sigma})= \psi(\sigma,\tau),
\end{equation*}
we can interpret this walk as a walk $\hat{U}$ on $(X_E\sqcup Y_F, E(G_\cap))$: 
\begin{equation*}
\hat{U}=\mathcal{U}_{\eta}^{-1} \hat{B}' \mathcal{U}_\eta.
\end{equation*}
To see more detail of $\hat{U}$, let us consider   
	\begin{align*} 
        \mathcal{U}_\eta \hat{U} \mathcal{U}_\eta^{-1}
        	&=  (\mathcal{U}_\eta \hat{F} \mathcal{U}_\eta^{-1})\cdot (\mathcal{U}_\eta \hat{E} \mathcal{U}_\eta^{-1}).
        \end{align*}
The first term of right-hand side (RHS) is the direct sum of the local unitary operators $\{\mathcal{U}_\eta\hat{F}_\sigma\mathcal{U}_{\eta}^{-1}\}$'s based on the decomposition $E(G_\cap) = \sqcup_{y\in Y_F}\{e\in E(G_\cap) \;|\; Y(e)=y\}$, while the second term in RHS 
is the direct-sum of the local unitary operators $\{\mathcal{U}_\eta\hat{E}_\tau\mathcal{U}_{\eta}^{-1}\}$'s  
based on the decomposition $E(G_\cap) = \sqcup_{x\in X_E}\{e\in E(G_\cap) \;|\; X(e)=x\}$. 
Therefore, this is nothing but a bipartite walk on $G_\cap(\mathcal{K}^{n,n-1})$. 
\par
Here summarize the above arguments. 
\begin{proposition}
\label{graphinterpretation}
The quantum walk on simplicial complex $\mathcal{K}$ is unitary equivalent to the bipartite walk on the induced bipartite graph $G_\cap(\mathcal{K}^{n,n-1})$. 
\end{proposition}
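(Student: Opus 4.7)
The plan is to make rigorous the informal argument already sketched after the statement, namely that the bijection $\eta$ between $\mathcal{K}^{n,n-1}$ and $E(G_\cap)$ transports the SQW2 structure onto a bipartite-walk structure on $G_\cap$. First I would check that $\eta\colon (\sigma,\tau)\mapsto F_\sigma E_\tau$ is a well-defined bijection onto $E(G_\cap)$. Well-definedness is immediate from (\ref{edge-def}), since $(\sigma,\tau)\in\mathcal{K}^{n,n-1}$ forces $E_\tau\cap F_\sigma\ni(\sigma,\tau)\neq\emptyset$, so $F_\sigma E_\tau\in E(G_\cap)$. Surjectivity holds because every edge of $G_\cap$ arises exactly from some nonempty intersection $E_\tau\cap F_\sigma$, and any element of that intersection is of the form $(\sigma,\tau)$. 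Injectivity is automatic since the labels $F_\sigma$ and $E_\tau$ recover $\sigma$ and $\tau$.

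Next, since $\eta$ is a bijection between the labelling sets of the canonical orthonormal bases of $\ell^2(\mathcal{K}^{n,n-1})$ and $\ell^2(E(G_\cap))$, the map
\[
(\mathcal{U}_\eta\psi)(E_\tau F_\sigma)=\psi(\sigma,\tau)
\]
is a unitary isomorphism: it permutes basis vectors and hence preserves the inner product. I would then observe that $\eta$ respects the two equivalence classes defining the direct-sum decompositions. Concretely, $(\sigma,\tau)\in E_{\tau_0}$ iff $\tau=\tau_0$ iff $\eta(\sigma,\tau)$ has its $X_E$-endpoint equal to $E_{\tau_0}$; similarly for $F_\sigma$ and the $X_F$-endpoint. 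Therefore
\[
\mathcal{U}_\eta(\mathcal{E}_\tau)=\mathcal{E}'_{E_\tau},\qquad \mathcal{U}_\eta(\mathcal{F}_\sigma)=\mathcal{F}'_{F_\sigma},
\]
and the two orthogonal decompositions of $\ell^2(\mathcal{K}^{n,n-1})$ are intertwined with the standard $X_E$- and $X_F$-decompositions of $\ell^2(E(G_\cap))$.

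With this intertwining in hand, I would define the local unitaries on the graph side by pushforward, $\hat{E}'_{E_\tau}:=\mathcal{U}_\eta\hat{E}_\tau\mathcal{U}_\eta^{-1}$ and $\hat{F}'_{F_\sigma}:=\mathcal{U}_\eta\hat{F}_\sigma\mathcal{U}_\eta^{-1}$, which are unitary on $\mathcal{E}'_{E_\tau}$ and $\mathcal{F}'_{F_\sigma}$ respectively by construction. Forming their direct sums commutes with the direct-sum decompositions above, so
\[
\mathcal{U}_\eta\hat{E}\mathcal{U}_\eta^{-1}=\bigoplus_{x\in X_E}\hat{E}'_x,\qquad \mathcal{U}_\eta\hat{F}\mathcal{U}_\eta^{-1}=\bigoplus_{y\in X_F}\hat{F}'_y,
\]
and composing yields $\mathcal{U}_\eta\hat{U}\mathcal{U}_\eta^{-1}=\hat{B}'$, which is by definition a bipartite walk on $G_\cap(\mathcal{K}^{n,n-1})$.

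There is no real obstacle here; the proof is essentially bookkeeping. The only point that deserves care is the bijectivity of $\eta$, which is why I would spell out surjectivity explicitly: one must make sure that the edge set $E(G_\cap)$ was defined precisely through $E_\tau\cap F_\sigma\neq\emptyset$ in $\mathcal{K}^{n,n-1}$ rather than through some abstract incidence, so that no parasitic edges appear and no pair $(\sigma,\tau)$ is missed. Given the definition (\ref{edge-def}), this holds tautologically, and the remainder of the argument is the natural transport of structure.
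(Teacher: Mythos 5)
Your argument is correct and is essentially identical to the paper's own proof, which likewise constructs the bijection $\eta((\sigma,\tau))=F_\sigma E_\tau$, the induced unitary $\mathcal{U}_\eta$, and observes that conjugation carries the decompositions $\bigoplus_\tau\mathcal{E}_\tau$ and $\bigoplus_\sigma\mathcal{F}_\sigma$ onto the $X_E$- and $X_F$-decompositions of $\ell^2(E(G_\cap))$, so that $\mathcal{U}_\eta\hat U\mathcal{U}_\eta^{-1}$ is a bipartite walk. Your extra care about bijectivity of $\eta$ (which rests on $E_\tau\cap F_\sigma$ being the singleton $\{(\sigma,\tau)\}$ when nonempty) is a welcome explicit check of a point the paper leaves tacit.
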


\subsection{Simplicial quantum walks on orientable simplicial complexes}
SQW2 defined in Definition \ref{dfn-SQW2} are turned out to be equivalent to quantum walks on bipartite graphs.
Here we consider SQW2 on simplicial complexes $\mathcal{K}$ with several geometric constraints on $\mathcal{K}$.
In particular, we pay attention to {\em orientable} simplicial complexes (see Appendix \ref{section-orientability} for orientability of simplicial complexes), in which case SQW2 on them can be expressed as simpler quantum walks on graphs with special geometric properties.

\subsubsection{Coined walks and bipartite walks on corresponding graphs}
First consider {\em coined (quantum) walks} on connected graphs. 
The coined walk, which has been traditionally studied, is defined by the pair of connected graph $G=(V,E)$ and sequence of local unitary operators assigned to each vertex $\{\hat{C}_u\}_{u\in V}$. 
Let $A$ be the symmetric (directed) arc set induced by $E$. The terminus and the origin of $a\in A$ are denoted by $t(a)$ and $o(a)\in V$, respectively. 
The inverse arc $a$ is denoted by $\bar{a}$ such that $o(a)=t(\bar{a})$ and $t(a)=o(\bar{a})$. 
The degree of $u\in V$ is $\deg (u)=\sharp \{ a\in A \;|\; t(a)=u \}$. 
Due to the symmetricity of the arc, it follows that $\deg (u)=\sharp \{ a\in A \;|\; o(a)=u \}$. 
\begin{definition}\rm
Let $G=(V,E)$ be a connected graph and $\ell^2(A)$ be the Hilbert space induced by the symmetric arcs of $G$. 
We decompose $\ell^2(A)$ into 
\begin{equation*}
\ell^2(A)=\bigoplus_{u\in V} \mathcal{C}_u,
\end{equation*}
where $\mathcal{C}_u=\{ \psi\in \ell^2(A) \;|\; t(a)\neq u \Rightarrow \psi(a)=0 \}$. 
Setting a local (arbitrary) unitary operator $\hat{C}_u$ on $\mathcal{C}_u$, we define the following unitary operator
\begin{equation*}
\hat{\Gamma}=\hat{S}\hat{C},
\end{equation*}
where $\hat{C}=\bigoplus_{u\in V} \hat{C}_u$ and $(\hat{S}\psi)(a)=\psi(\bar{a})$. 
We call the walk driven by $\hat{\Gamma}$ {\em a coined (quantum) walk}. 
\end{definition}
The subdivision graph of $G$ is denote by $S(G)$ so that 
\begin{align*} 
V(S(G)) &= V\sqcup E, \\ 
E(S(G)) &= \{ue\;|\; u\in V,\; e\in E,\; u \text{ is an endpoint of } e \text{ in }G\}. 
\end{align*}
Note that the degree of $e\in E$ is identically two in $S(G)$. Thus on the bipartite walk on $S(G)=(V\sqcup E,E(S(G)))$, 
the local unitary operator $\hat{F}_e$ is two-dimensional unitary operator for $e\in E$. 
The following proposition presents that every coined walks on graphs can be described by bipartite walks on their subdivision graphs.
In other words, the underlying graph where bipartite walks admit coined walks must be a subdivision graph.
This statement plays an essential role to connect coined walks to SQWs on several simplicial complexes.

\begin{proposition}[\cite{PS}]
\label{bipartitecoind}
Let $G=(V,E)$ be a connected graph and $\hat{B}=\bigoplus_{e\in E}\hat{F}_e\cdot \bigoplus_{u\in V}\hat{E}_u$ be the time evolution operator of a bipartite walk on the subdivision graph $S(G)$. 
Assume 
	\[ F_e\cong \begin{bmatrix} 0 & 1 \\ 1 & 0 \end{bmatrix}\] 
for every $e\in E$. 
Then there exists a coined walk $\hat{\Gamma}$ on $G$ so that it is unitary equivalent to $\hat{B}$, that is, 
 	\[ \hat{\Gamma}=\mathcal{V}^{-1} \hat{B} \mathcal{V}, \] 
where $\mathcal{V}$ is a unitary map from $\ell^2(A(G)) \to \ell^2(E(S(G)))$ with 
	\[ (\mathcal{V}\psi)(ue)=\psi(a) \mathrm{\;with\;} t(a)=u,\; |a|=e. \]
Here $|a|\in E$ is the support of $a\in A$. 
\end{proposition}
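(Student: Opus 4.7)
The plan is to exploit the natural bijection $A(G) \leftrightarrow E(S(G))$ under which the bipartite decomposition of $\ell^{2}(E(S(G)))$ on the $V$-side matches the coin decomposition of $\ell^{2}(A(G))$, while the $E$-side matches the flip-flop shift.

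First I would verify that the assignment $a \mapsto t(a)|a|$ is a bijection between $A(G)$ and $E(S(G))$. Given an edge $ue \in E(S(G))$, the vertex $u$ is an endpoint of $e$ in $G$, so there is a unique arc $a$ with $t(a)=u$ and $|a|=e$; the inverse direction is trivial. This shows $\mathcal{V}$ is a well-defined unitary between $\ell^{2}(A(G))$ and $\ell^{2}(E(S(G)))$. Under this bijection, the decomposition $\ell^{2}(E(S(G))) = \bigoplus_{u \in V} \mathcal{E}'_u$ pulls back to $\bigoplus_{u \in V} \mathcal{C}_u$, because $X(ue)=u$ corresponds to $t(a)=u$. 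Hence $\mathcal{V}^{-1}\bigl(\bigoplus_{u\in V}\hat{E}'_u\bigr)\mathcal{V}$ is a direct sum of local unitaries on each $\mathcal{C}_u$, which is by definition a valid coin operator $\hat{C} = \bigoplus_{u\in V} \hat{C}_u$ on $G$, with $\hat{C}_u := \mathcal{V}^{-1}\hat{E}'_u\mathcal{V}$.

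Next I would identify the second factor. The decomposition $\ell^{2}(E(S(G))) = \bigoplus_{e \in E} \mathcal{F}'_e$ pulls back under $\mathcal{V}$ to the decomposition $\ell^{2}(A(G)) = \bigoplus_{e \in E} \{\psi : |a|\neq e \Rightarrow \psi(a)=0\}$. Because every $e \in E$ has exactly two endpoints $u_1,u_2$ in $G$, the subspace $\mathcal{F}'_e$ is two-dimensional, spanned by $u_1 e$ and $u_2 e$; these correspond under $\mathcal{V}^{-1}$ to the unique pair of arcs $a, \bar a$ with support $e$, since $t(a)=u_1$ and $t(\bar a)=u_2$. Thus the hypothesis that $\hat{F}_e$ acts as $\bigl[\begin{smallmatrix}0 & 1\\ 1 & 0\end{smallmatrix}\bigr]$ in the basis $\{u_1 e, u_2 e\}$ translates exactly to the map $\psi(a)\mapsto \psi(\bar a)$ on the pair $\{a,\bar a\}$, which is precisely the flip-flop shift $\hat{S}$ restricted to this two-dimensional subspace. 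Therefore $\mathcal{V}^{-1}\bigl(\bigoplus_{e\in E}\hat{F}_e\bigr)\mathcal{V} = \hat{S}$.

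Finally, composing the two identifications gives
\[
\mathcal{V}^{-1}\hat{B}\mathcal{V} = \mathcal{V}^{-1}\!\Bigl(\bigoplus_{e\in E}\hat{F}_e\Bigr)\mathcal{V}\cdot \mathcal{V}^{-1}\!\Bigl(\bigoplus_{u\in V}\hat{E}'_u\Bigr)\mathcal{V} = \hat{S}\hat{C} = \hat{\Gamma},
\]
which is the desired coined walk on $G$. There is no real obstacle here; the proof is essentially a bookkeeping exercise, and the only point requiring care is keeping the roles of $t(a)$ versus $o(a)$ consistent so that the swap $u_1 e \leftrightarrow u_2 e$ on the $E$-side of $S(G)$ is correctly recognized as the flip-flop $a \leftrightarrow \bar a$ (rather than, say, the identity) under $\mathcal{V}$. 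Once this identification is fixed by the convention $(\mathcal{V}\psi)(ue) = \psi(a)$ with $t(a)=u$, everything lines up.
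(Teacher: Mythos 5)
Your proof is correct. The paper itself gives no argument for this proposition---it is quoted from \cite{PS}---but your verification is the standard one: the bijection $a \mapsto t(a)\,|a|$ identifies the $V$-side blocks of $S(G)$ with the coin subspaces $\mathcal{C}_u$ and the two-dimensional $E$-side blocks $\{u_1e,\,u_2e\}$ with the arc pairs $\{a,\bar a\}$, under which the swap $F_e$ becomes the flip-flop shift; you also correctly preserve the order of composition so that $\mathcal{V}^{-1}\hat{B}\mathcal{V}=\hat{S}\hat{C}$. The only implicit hypothesis worth noting is that $G$ has no self-loops, so that each $\mathcal{F}'_e$ is genuinely two-dimensional, which is already forced by the assumption $F_e\cong\bigl[\begin{smallmatrix}0&1\\1&0\end{smallmatrix}\bigr]$.
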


\subsubsection{Equivalence of simplicial quantum walks with geometric constraints}
In this section, we show that SQW2 on every orientable simplicial complexes can be reduced to coined walks on associated graphs.
This fact will make the spectral analysis simple. 
In this subsection, let $\mathcal{K}$ be an orientable $n$-dimensional simplicial complex.
We also assume that all the local unitary coins of the first operators are the Grover operators. 
A matrix expression of the $m$-dimensional Grover operator is denoted by 
\begin{equation*}
\frac{2}{m}J_m-I_m,
\end{equation*}
where $J_m$ is the $m$-dimensional all $1$ matrix and $I_m$ is the identity matrix. 
Remark that the first unitary operator of SQW2 on $\mathcal{K}$ in this setting becomes  
\begin{equation*}
\hat{E}= \left( \bigoplus_{\tau \in \bra \mathcal{K}_{n-1}\ket,\ |\tau|\not \in B_{n-1}(\mathcal{K})} \begin{bmatrix} 0 & 1 \\ 1 & 0 \end{bmatrix}  \right) 
\oplus 
\left( \bigoplus_{\tau \in \bra \mathcal{K}_{n-1}\ket,\ |\tau| \in B_{n-1}(\mathcal{K})} I_1 \right),
\end{equation*}
where $B_{n-1}(\mathcal{K})$ is the set of boundary simplices in $\mathcal{K}$ consisting of $(n-1)$-simplices (cf. Appendix \ref{appendix-cpx}). 
\begin{definition}\rm
(Associated graph of $\mathcal{K}$) 
Let $\mathcal{K}$ be an $n$-dimensional orientable simplicial complex. 
The {\em associated graph of $\mathcal{K}$} is denoted by $G_a(\mathcal{K})=(V,E)$ with 
\begin{align*}
        V &= \mathcal{K}_n,  \\ 
        \sigma\sigma' \in E &\quad \stackrel{def}{\Leftrightarrow}\quad \sigma \cap \sigma'\in \mathcal{K}_{n-1}\setminus \{\emptyset\}.
\end{align*}
\end{definition}

\begin{definition}\rm
The {\em duplication graph} of a simple graph $G=(V,E)$ is a bipartite graph $D(G)=(V\sqcup \phi(V),E_2)$, where 
$\phi(V)$ is a copy of $V$ and 
\begin{align*}
u\phi(v)\in E_2\quad \stackrel{def}{\Leftrightarrow}\quad uv\in E.
\end{align*}
\end{definition}
\begin{lemma}
\label{lem-ori}
Assume that $\mathcal{K}$ be an $n$-dimensional simplicial complex which is strongly connected and orientable. 
Then we have
\begin{equation*}
S\circ D\circ G_{a}(\mathcal{K}) \bullet B_{n-1} \cong G_{\cap}(\mathcal{K}^{n,n-1}),
\end{equation*}
where \lq\lq $\bullet B_{n-1}$" is an operation to $S\circ D\circ G_{a}(\mathcal{K})$ defined as follows: 
if $|\sigma|\in \mathcal{K}_n \subset V(S\circ D\circ G_{a}(\mathcal{K}))$ has $r$ primary faces which belong to $B_{n-1}$, 
then we replace $|\sigma|$ and $\phi(|\sigma|)$ into $r$-bunches, respectively, that is, 
we add $r$ one-paths to $|\sigma|$ and its copy $\phi(|\sigma|)$, respectively. 
The resulting graph is illustrated in Figure \ref{fig-SDGB}.
\end{lemma}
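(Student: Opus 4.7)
The plan is to construct an explicit vertex-bijection $\Phi$ from $V(S\circ D\circ G_a(\mathcal{K})\bullet B_{n-1})$ onto $V(G_\cap(\mathcal{K}^{n,n-1}))$ and verify that it preserves adjacency. The orientability hypothesis is used to pair up oriented simplices on the right-hand side with the two \lq\lq sides'' supplied by the duplication graph on the left.

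First I fix a coherent orientation of $\mathcal{K}$, which exists by orientability together with strong connectivity: this selects, for each $|\sigma|\in\mathcal{K}_n$, a distinguished representative $\sigma\in\langle\mathcal{K}_n\rangle$ so that whenever $|\tau|=|\sigma_1|\cap|\sigma_2|\in\mathcal{K}_{n-1}\setminus B_{n-1}$, one has $\sigma_1\triangleright\tau$ iff $\sigma_2\triangleright\tau^{\mathrm{op}}$, where $\tau^{\mathrm{op}}$ denotes the opposite orientation of $\tau$. Since every oriented simplex is uniquely either the distinguished one or its opposite, cardinalities immediately agree: both sides carry $2|\mathcal{K}_n|$ \lq\lq top'' vertices, and on the left the $2(|\mathcal{K}_{n-1}|-|B_{n-1}|)$ subdivision vertices together with the $2|B_{n-1}|$ pendants match $|\langle\mathcal{K}_{n-1}\rangle|=2|\mathcal{K}_{n-1}|$ on the right.

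Next I define $\Phi$ on vertex classes: $|\sigma|\mapsto F_\sigma$ and $\phi(|\sigma|)\mapsto F_{\sigma^{\mathrm{op}}}$; a subdivision vertex lying on the edge $|\sigma_1|\phi(|\sigma_2|)\in E(D(G_a(\mathcal{K})))$ (coming from the shared interior face $|\tau|=|\sigma_1|\cap|\sigma_2|$) is sent to $E_\tau$, where the orientation of $\tau$ is chosen so that $\sigma_1\triangleright\tau$; the pendant at $|\sigma|$ for a boundary face $|\tau|\subset|\sigma|$ is sent to $E_\tau$ with $\sigma\triangleright\tau$, and the pendant at $\phi(|\sigma|)$ to $E_{\tau^{\mathrm{op}}}$. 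The two ordered pairs $(|\sigma_1|,|\sigma_2|)$ and $(|\sigma_2|,|\sigma_1|)$ produce two distinct subdivision vertices realizing both orientations of $|\tau|$, and similarly for the boundary-face pendants, so $\Phi$ is well defined and bijective.

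Edge preservation is then immediate from the coherent-orientation identity. A subdivision vertex on $|\sigma_1|\phi(|\sigma_2|)$ is adjacent precisely to $|\sigma_1|$ and $\phi(|\sigma_2|)$; their $\Phi$-images $F_{\sigma_1}$ and $F_{\sigma_2^{\mathrm{op}}}$ are both joined to $E_\tau$ in $G_\cap(\mathcal{K}^{n,n-1})$, since $\sigma_1\triangleright\tau$ by construction and $\sigma_2^{\mathrm{op}}\triangleright\tau$ follows from $\sigma_2\triangleright\tau^{\mathrm{op}}$. Pendant vertices contribute a single analogous adjacency, and conversely every edge $E_\tau F_\sigma$ of $G_\cap(\mathcal{K}^{n,n-1})$ records a unique $\triangleright$-relation and so lies in the image of $\Phi$. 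The main technical point is the coherent-orientation identity for $\triangleright$, which must be extracted from the explicit description of induced directed primary faces in (\ref{iipp}); once this parity computation is in hand, the rest of the isomorphism is essentially formal bookkeeping.
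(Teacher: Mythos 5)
Your proposal is correct and follows essentially the same route as the paper: an explicit vertex bijection sending $|\sigma|, \phi(|\sigma|)$ to the two oriented copies $F_\sigma, F_{\sigma^{\mathrm{op}}}$, subdivision vertices of interior edges to $E_\tau$ with the induced orientation, and pendants to boundary faces, with adjacency preservation coming from the non-contradicted-orientation condition (which, note, is directly the paper's definition of orientability rather than something that needs to be re-extracted from (\ref{iipp})). The only cosmetic difference is that the paper labels the two copies by signs $(|\sigma|,\pm)$ where you use a distinguished representative and its opposite; the content is identical.
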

\begin{figure}[htbp]\em
\begin{minipage}{0.32\hsize}
\centering
\includegraphics[width=6.0cm]{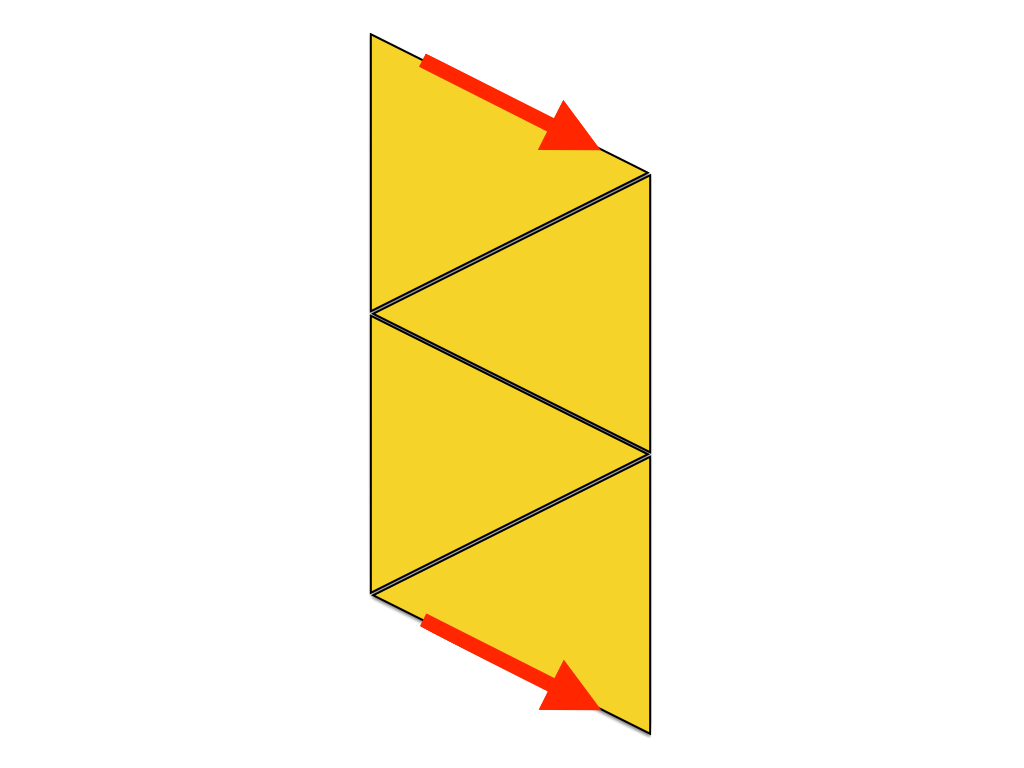}
(a)
\end{minipage}
\begin{minipage}{0.32\hsize}
\centering
\includegraphics[width=6.0cm]{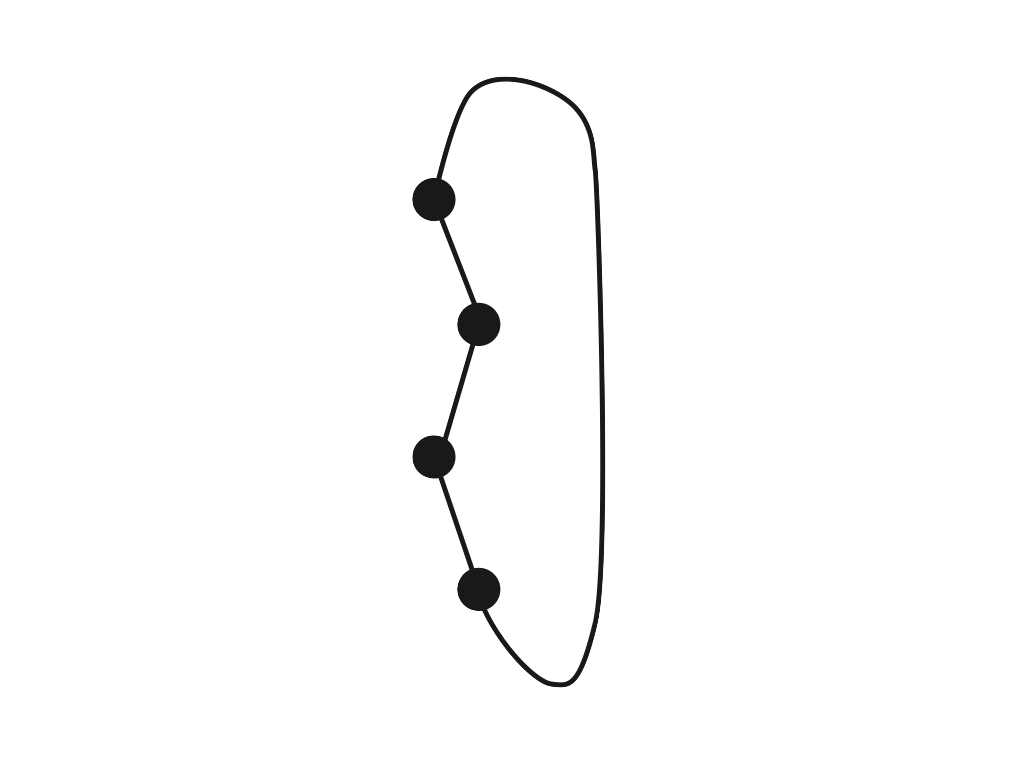}
(b)
\end{minipage}
\begin{minipage}{0.32\hsize}
\centering
\includegraphics[width=6.0cm]{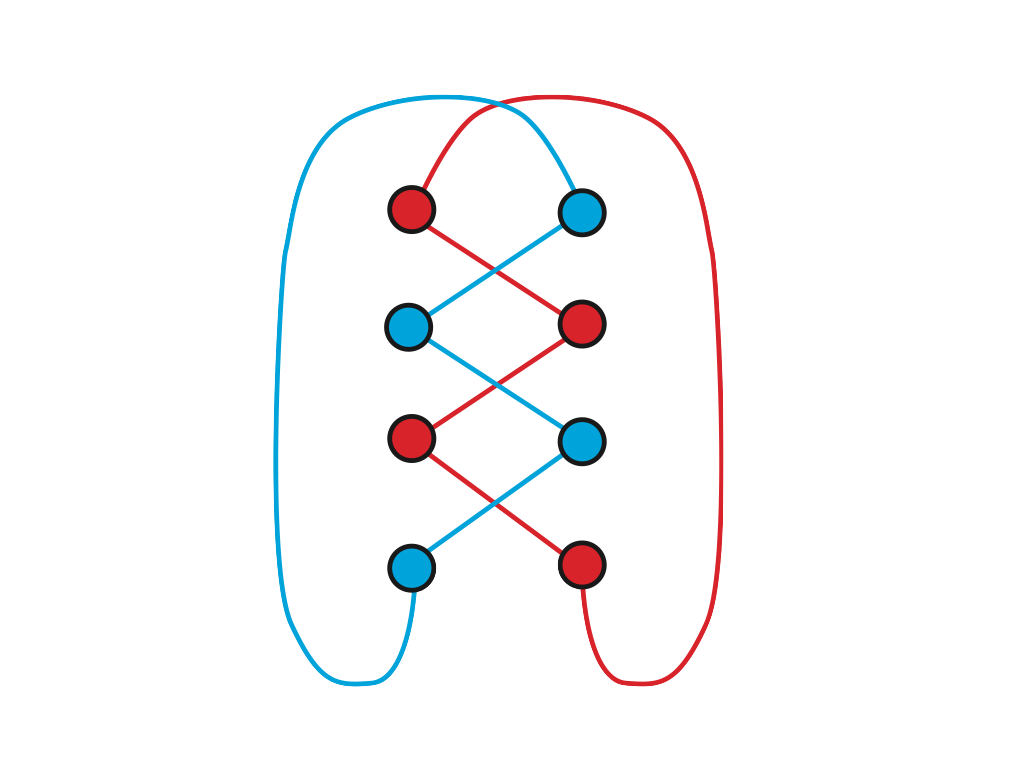}
(c)
\end{minipage}
\begin{minipage}{0.32\hsize}
\centering
\includegraphics[width=6.0cm]{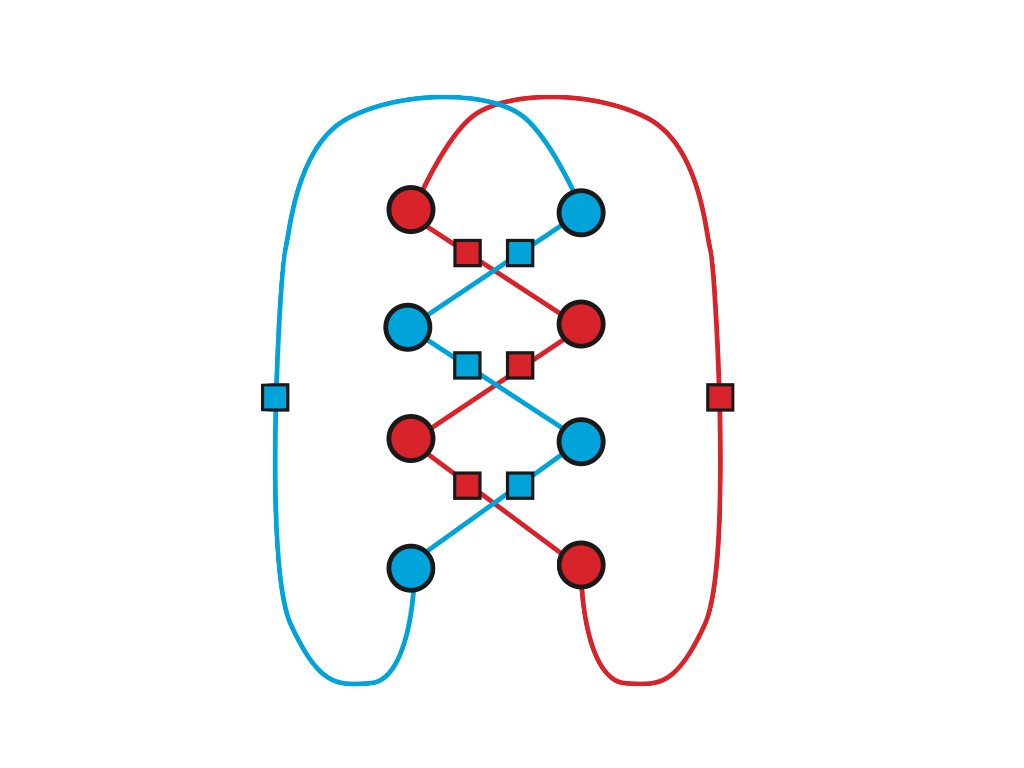}
(d)
\end{minipage}
\begin{minipage}{0.32\hsize}
\centering
\includegraphics[width=6.0cm]{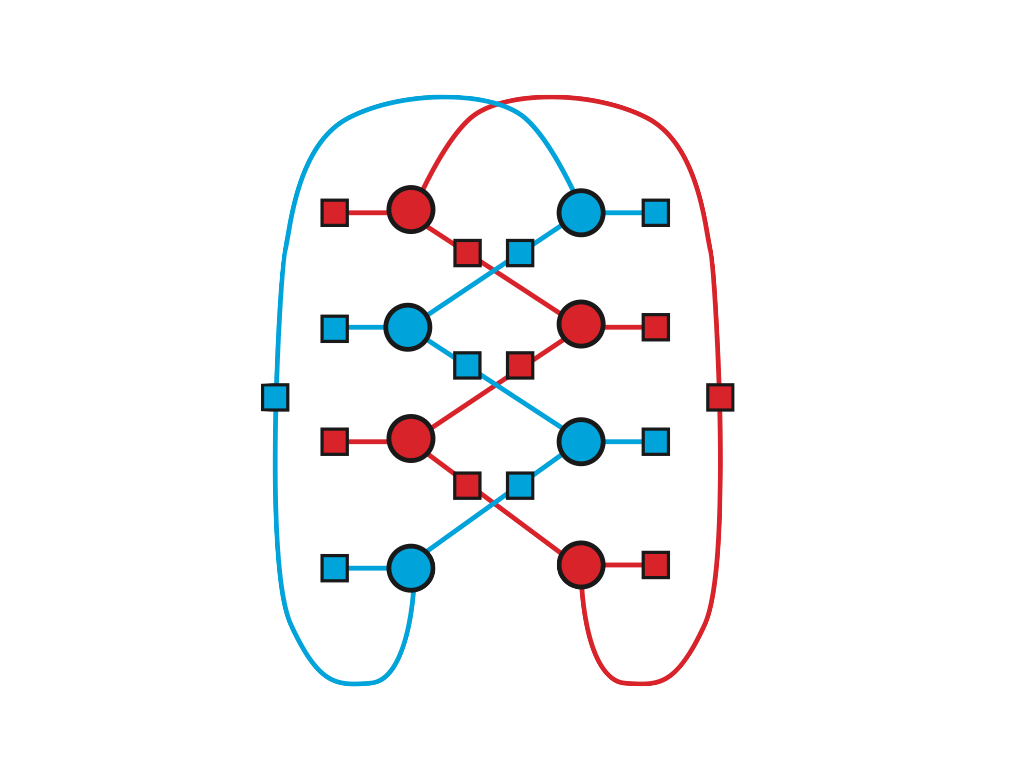}
(e)
\end{minipage}
\caption{Orientable simplicial complex $\mathcal{K}$ and the graph $S\circ D\circ G_{a}(\mathcal{K}) \bullet B_{n-1}$}
\label{fig-SDGB}
(a) : Original (orientable) simplicial complex $\mathcal{K}$, where the orientation is ignored. 
The arrow means the identification of edge.
(b) : The associated graph $G_a(\mathcal{K})$.
(c) : The duplication graph $D\circ G_{a}(\mathcal{K})$ of $G_{a}(\mathcal{K})$.
(d) : The subdivision graph $S\circ D\circ G_{a}(\mathcal{K})$ of $D\circ G_{a}(\mathcal{K})$.
(e) : The final graph $S\circ D\circ G_{a}(\mathcal{K}) \bullet B_{n-1}$.
Note that this graph is bipartite with vertex sets $V_1 = \text{(circles)}$ and $V_2=\text{(squares)}$.
\end{figure}
\begin{proof}
Let $(V_1,E_1) = G_{\cap}(\mathcal{K}^{n,n-1})$ and $(V_2,E_2) = S\circ D\circ G_{a}(\mathcal{K}) \bullet B_{n-1}$. 
The vertex set of $D\circ G_a(\mathcal{K})$ is constructed by $\mathcal{K}_n \sqcup \phi(\mathcal{K}_n)$. 
The edge set of $D\circ G_a(\mathcal{K})$ is expressed by
	\[  Int_{n-1}:=\{|\sigma|\phi(|\sigma|') \; |\; |\sigma|\cap |\sigma|' \in \mathcal{K}_{n-1}\setminus \{\emptyset\} \}. \]
Therefore the vertex set of $S\circ D\circ G_a(\mathcal{K})$ is described by 
	\[ V_1=\mathcal{K}_n \sqcup \phi(\mathcal{K}_n) \sqcup Int_{n-1} \sqcup B_{n-1} \sqcup \phi(B_{n-1}). \]
\par
The vertex set $V_2$ is decomposed into 
	\[ V_2=\{ E_\tau \;|\; \tau\in \bra \mathcal{K}_{n-1} \ket \} \sqcup \{ F_\sigma \;|\; \sigma\in \bra \mathcal{K}_n \ket \} \]
and $E_\tau$ and $F_\sigma$ is connected in $G_\cap$ if and only if $E_\tau \cap F_\sigma \neq \emptyset$, which is 
equivalent to $\sigma\triangleright \tau$. 
We present $V_2$ by
\begin{equation*}
V_2\cong \bra \mathcal{K}_n \ket \sqcup \bra \mathcal{K}_{n-1} \ket.
\end{equation*}
From now on we regard $V_2$ as the above RHS. 
\par
We define the following map $\xi:V_1\to V_2$ so that 
\begin{align*}
        \xi(|\sigma|) &= (|\sigma|,+),\; \xi(\phi(|\sigma|)) = (|\sigma|,-), \; \mathrm{for} \; |\sigma|\in \mathcal{K}_n\\
        \xi(|\sigma|\phi(|\sigma|')) &= \tau \;\mathrm{with}\; (|\sigma|,+),\; (|\sigma|',-)\triangleright \tau, \; \mathrm{for} \; |\sigma|\phi(|\sigma|')\in Int_{n-1} \\
        \xi(b_j) &= \tau_j,\; \xi(\phi(b_j)) = \tau_j'.
\end{align*}
Here $\{b_1,\dots,b_r\}$ is the set of bunches on $|\sigma|\in \mathcal{K}_n$ and 
$\{\tau_1,\dots,\tau_r\}$ is the set of boundaries with $(|\sigma|,+)\triangleright \tau_j$, and 
$\{\tau_1',\dots,\tau_r'\}$ is the set of boundaries with $(|\sigma|,-)\triangleright \tau_j'$, where $(|\sigma|, \pm)$ denotes the simplex $|\sigma|$ with one choice of orientations. 
We can check that this map $\xi$ is bijective due to the orientability. 
From this bijection map, it is easy to see that $uv\in E_1$ if and only if $\xi(u)\xi(v)\in E_2$. 
This completes the proof. 
\end{proof}
\begin{remark}
The proof indicates that, if $\mathcal{K}$ is not orientable, then the associated graph $G_\cap (\mathcal{K}^{n,n-1})$ cannot be expressed as a duplication graph (with bunches corresponding to boundaries).
In other words, geometric obstructions such as non-orientability give responses to complexity of associated graphs.
\end{remark}

If the orientable simplicial complex $\mathcal{K}$ has {\em no boundaries}, SQW2 on $\mathcal{K}$ can be expressed much simpler as follows.

\begin{theorem}
\label{thm-equiv-duplicate}
Let $\mathcal{K}$ be an $n$-dimensional orientable simplicial complex without boundaries. 
The time evolution of SQW2 on $\mathcal{K}$ driven by the Grover operator is denoted by $U$,
while that of the coined walk on $D\circ G_a(\mathcal{K})$ driven by the Grover operator is denoted by $\Gamma$. 
Then we have 
\begin{equation*}
U=\mathcal{W}^{-1}\Gamma^T \mathcal{W},
\end{equation*}
where $\ast^T$ denotes the transpose of vectors or operators. 
Here the unitary map $\mathcal{W}:\ell^2(\mathcal{K}^{n,n-1})\to \ell^2(A(D\circ G_a(\mathcal{K})))$ is given by the following: 
for $t(a)=|\sigma|$, $o(a)=\phi(|\sigma|')$ with $(|\sigma|,+),(|\sigma|',-) \triangleright \tau$, 
\begin{equation*} 
(\mathcal{W}\psi)(a) = \psi( (|\sigma|,+), \tau), \quad (\mathcal{W}\psi)(\bar{a}) = \psi( (|\sigma|,-), \tau).
\end{equation*}
\end{theorem}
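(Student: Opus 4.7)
The plan is to chain the three equivalences already proved in the previous sections. First, I would apply Proposition \ref{graphinterpretation} to realize SQW2 on $\mathcal{K}$, via $\mathcal{U}_\eta$, as a bipartite walk on $G_\cap(\mathcal{K}^{n,n-1})$. Since $\mathcal{K}$ has no boundaries, $B_{n-1}(\mathcal{K})=\emptyset$ and the \lq\lq bunching" modification in Lemma \ref{lem-ori} is trivial, so the lemma supplies a graph isomorphism $\xi\colon G_\cap(\mathcal{K}^{n,n-1}) \cong S\circ D\circ G_a(\mathcal{K})$. Under $\xi$, the local unitaries $\hat{E}_\tau$ of SQW2 (each a $2\times 2$ flip-flop, since every $(n-1)$-simplex is internal and the $2$-dimensional Grover operator coincides with the flip-flop matrix) become the edge-vertex operators of the bipartite walk on the subdivision graph, while the Grover coins $\hat{F}_\sigma$ of dimension $n+1$ become the main-vertex operators. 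This matches exactly the hypothesis of Proposition \ref{bipartitecoind}, which then yields equivalence to a coined walk on $D\circ G_a(\mathcal{K})$ whose local coin at every vertex is the $(n+1)$-dimensional Grover operator (and the degree is indeed $n+1$ at every vertex because $\mathcal{K}$ is boundary-free).

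The transpose in the conclusion should arise from a composition-order mismatch. In SQW2 one has $\hat{U}=\hat{F}\circ\hat{E}$, whereas the coined walk convention is $\hat{\Gamma}=\hat{S}\hat{C}$; under the identifications above, $\hat{E}\leftrightarrow\hat{S}$ and $\hat{F}\leftrightarrow\hat{C}$, so SQW2 corresponds to $\hat{C}\hat{S}$. Because both the Grover operator and the flip-flop are symmetric in the standard basis, $\hat{\Gamma}^T = \hat{C}^T \hat{S}^T = \hat{C}\hat{S}$, so transporting the equivalence of Proposition \ref{bipartitecoind} through transposition (using that its intertwiner is essentially a basis relabelling, hence an orthogonal matrix whose transpose equals its inverse) produces the required $\hat{U} = \mathcal{W}^{-1}\hat{\Gamma}^T\mathcal{W}$.

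To finish I would verify the explicit formula for $\mathcal{W}$ by tracking standard basis elements through each identification. Under $\xi$ in Lemma \ref{lem-ori}, $(|\sigma|,+)$ corresponds to the vertex $|\sigma|$ of $D\circ G_a(\mathcal{K})$ and $(|\sigma|,-)$ to $\phi(|\sigma|)$; consequently, for the arc $a$ with $t(a)=|\sigma|$, $o(a)=\phi(|\sigma|')$ and $(|\sigma|,+),(|\sigma|',-)\triangleright\tau$, the pair $((|\sigma|,+),\tau)$ is sent to $a$ and $((|\sigma|,-),\tau)$ to $\bar{a}$, which is precisely the stated formula. The main obstacle I anticipate is not any single step (each is an application of a named proposition) but the bookkeeping: coordinating the $\pm$ orientation labels on $n$-simplices with the $\mathrm{id}/\phi$ labels of the duplication graph, and justifying carefully that reversing the composition order in a bipartite walk with symmetric local unitaries corresponds exactly to taking the transpose of the coined walk's time evolution.
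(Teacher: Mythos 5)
Your proposal follows exactly the same route as the paper's proof: chaining Proposition \ref{graphinterpretation}, Lemma \ref{lem-ori} (with the bunching operation trivial since $B_{n-1}(\mathcal{K})=\emptyset$), and Proposition \ref{bipartitecoind}. You are in fact more careful than the paper, whose proof never explains where the transpose comes from; your account via the order mismatch $\hat F\circ\hat E \leftrightarrow \hat C\hat S=\hat\Gamma^T$ (using symmetry of the Grover and flip-flop matrices) is correct and consistent with the identity $\mathcal{W}U_*\mathcal{W}^{-1}=\hat C\hat S\hat S_*'$ that the paper uses later in the search argument.
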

The inverse map $\mathcal{W}^{-1}$ is 
\begin{equation*}
(\mathcal{W}^{-1}\varphi)((|\sigma|,\pm),\tau) 
= \begin{cases} 
	 \varphi(a) & \text{ for $(|\sigma|, +)$,} \\ 
	 \varphi(\bar{a}) & \text{ for $(|\sigma|, -)$,} 
\end{cases}
\end{equation*}
where $t(a)=|\sigma|$, $o(a)=|\sigma|'$ with $|\sigma|\cap |\sigma|'=|\tau|$ in $\mathcal{K}$. 
\begin{proof}
From Proposition~\ref{graphinterpretation}, SQW2 is described by a bipartite walk on the induced bipartite graph $G_\cap$. 
By Lemma.~\ref{lem-ori}, since $\mathcal{K}$ is orientable and dose not have boundaries, the induced graph is described by $S\circ D\circ G_a(\mathcal{K})$. 
This walk is therefore expressed by a bipartite walk on $S\circ D \circ G_a(\mathcal{K})$. 
Since we adopt the Grover operators to local unitary operators, then Proposition~\ref{bipartitecoind} implies that this walk is isomorphic to 
coined walk on $D \circ G_a(\mathcal{K})$. 
This completes the proof. 
\end{proof}

\begin{remark}\rm
\label{rem-dim1}
If we define the operator $U$ in Theorem \ref{thm-equiv-duplicate} on $1$-dimensional simplicial complexes; namely on graphs, the resulting quantum walk is not the Grover walk on graphs, but actually the walk on the {\em double graphs} associated with given graphs.
The \lq\lq double" structure stems from the orientability of vertices.
By the definition, vertices of graphs are trivially orientable and hence, according to the definition of SQWs, each vertex admits two independent states.
However, in the case of original coined walks, vertices are not distinguished by orientations, which generates the gap of states associated with vertices between coined walks and SQW2 on graphs.
\end{remark}

Summarizing arguments in this section, we have the following quantum walks and unitary equivalence between them.
\begin{itemize}
\item Simplicial quantum walk $\hat U$ on a simplicial complex $\mathcal{K}$ (SQW2, Definition \ref{dfn-SQW2}) and bipartite walk $\hat B$ on the induced bipartite graph $G_\cap(\mathcal{K}^{n,n-1})$ (Proposition \ref{graphinterpretation}).
\item Coined quantum walk $\hat \Gamma$ on a graph $G$ and bipartite walk $\hat B$ on the subdivision graph $S(G)$ (Proposition \ref{bipartitecoind}).
\item Grover-driven SQW2 $U$ on an orientable simplicial complex with no boundary $\mathcal{K}$ and Grover-driven coined quantum walk $\Gamma$ on the associated duplication graph $D\circ G_a(\mathcal{K})$ (Theorem \ref{thm-equiv-duplicate}).
\end{itemize}
The last unitary equivalent quantum walks are main actors of quantum search problems in the next section.

%
%
\section{Simplicial quantum search}
\label{section-search}
In this section, we consider quantum search problem on simplicial complexes, which we shall call {\em simplicial quantum search} problem.
As an analogy of quantum search on graphs, our main problem is given as follows.
\begin{problem}[Simplicial quantum search problem]
\label{prob-search}
For a given $n$-dimensional strongly connected simplicial complex $\mathcal{K}$ with $\sharp \mathcal{K}_{n-1} = N$, fix an $(n-1)$-dimensional face as {\em a marked simplex}.
Then find the marked simplex with high probability independent of $N$ within a time $O(\sqrt{N})$, or at the latest $o(N)$.
\end{problem}

This is an analogue of quantum search problems on graphs.
Note that, in the case of quantum search on complete graphs or Grover's Algorithm (e.g., \cite{P2013}), the order of time complexity $O(\sqrt{N})$ for search problems in $N$ database entries (like vertices) is optimal.
In our case, the entries correspond to primary faces in the $n$-dimensional simplicial complex $\mathcal{K}$, and hence the order $O(\sqrt{N})$ will be the standard (and possibly optimal) benchmark for achieving quantum speed-up in search problems.
\par
Theorem \ref{thm-equiv-duplicate} claims that simplicial quantum walks on orientable simplicial complexes with no boundary driven by Grover operators are equivalent to Grover-type coined walks on graphs with duplication structures.
This observation gives us possibilities of quantum search considerations on simplicial complexes from the viewpoint of search on graphs.

\subsection{Setting and the main result}
Let $K_n$ be the complete graph with $n$ vertices. The clique complex of $K_n$ is denoted by $X(K_n)$. The $m$-skeleton of $X(K_n)$ is denoted by $X(K_n)^{(m)}$ $(m=0,\dots,n-1)$.  (See Appendix \ref{appendix-cpx} for notions of skeletons and clique complexes).
Our underlying simplicial complex $\mathcal{K}$ here is as follows:
\begin{equation}
\label{main-cpx}
\mathcal{K} := X(K_{n+2})^{(n)},
\end{equation}
namely, the $n$-skeleton of the clique complex $X(K_{n+2})$ of the complete graph $K_{n+2}$ with $(n+2)$ vertices.
This complex is regarded as a triangulation of $n$-dimensional unit sphere 
\begin{equation*}
S^n = \left\{(x_0, \cdots, x_n)\in \mathbb{R}^{n+1} \mid \sum_{i=0}^n x_i^2 = 1\right\}.
\end{equation*}
Indeed, $\mathcal{K}$ is the simplicial complex generated by an $(n+1)$-simplex removing the $(n+1)$-simplex itself.
In particular, it is strongly connected and orientable. 
Moreover, $B_{n-1}(\mathcal{K}) = \emptyset$.
We will therefore apply Theorem~\ref{thm-equiv-duplicate} to (\ref{main-cpx}) when we consider SQW2 on this $n$-dimensional simplicial complex $\mathcal{K}$ driven by the Grover operator\footnote{
Our search problem is topologically regarded as {\em the quantum search on the unit sphere}.
}.
From now on, we propose a quantum search on this simplicial complex with a marked simplex $\tau_*\in \mathcal{K}_{n-1}$ as follows: 
First, we set the indicator function of $\tau_*$, $f:\langle \mathcal{K}_{n-1} \rangle\to \{0,1\}$ such that
	\[ f(\tau)=\begin{cases} 1 & \text{: $|\tau|=\tau_*$,} \\ 0 & \text{: $|\tau|\neq \tau_*$.} \end{cases} \]
Secondly setting the total Hilbert space by $\ell^2(\mathcal{K}^{n,n-1})$ with $\dim \ell^2(\mathcal{K}^{n,n-1})=2(n+1)(n+2)\equiv 4N$, 
we adopt the following time evolution operator of the SQW2 which drives the quantum search. 

\begin{equation}
\label{qs}
\hat U_* = \left( \bigoplus_{ \sigma \in \bra\mathcal{K}_n \ket}\hat{F}_\sigma \right) \circ \left( \bigoplus_{\tau \in \bra \mathcal{K}_{n-1} \ket}\hat{E}_\tau \right)
\end{equation} 
\par
The first and the second terms correspond to the shift and coin operators of coined walk, respectively.
In coined walks on graphs with a marked vertex set $M\subset V$, the local coin operator depends on vertex so that
\begin{equation*}
\hat{C}_u = \frac{1+(-1)^{\bs{1}_{M}(u)}}{\deg(u)}J_{\deg(u)} - I_{\deg(u)},
\end{equation*}
where
\begin{equation*}
{\bs{1}_{M}(u)} = \begin{cases}
	1 & \text{ if $u\in M$,}\\
	0 & \text{ if $u\not \in M$,}
\end{cases}
\end{equation*}
which means the coin operator is perturbed at the marked vertices. Here $J_k$ is the all one $k$ dimensional matrix. 
Moreover it is easy to recognize that the shift operator is expressed by the direct-sum of $2$-dimensional Grover operators. 
\par
We extend this idea to our search problem: the \lq\lq degree" corresponds to \lq\lq $2$", since the simplicial complex is assumed to be oriented with no boundaries.
Therefore, the local unitary operator of $\hat{E}_\tau$ is expressed by
	\[ \hat{E}_\tau=\frac{1+(-1)^{f(\tau)}}{2} J_2-I_2, \] 
that is, 
\begin{equation*}
\label{perturb} 
	\hat{E}_{\tau} = 
  	\begin{cases} 
        \begin{bmatrix} 0 & 1 \\ 1 & 0 \end{bmatrix} \text{: $|\tau|\neq \tau_\ast$}, \\
        \\
        -\begin{bmatrix} 1 & 0 \\ 0 & 1 \end{bmatrix} \text{: $|\tau|= \tau_\ast$}.
        \end{cases}  
\end{equation*}
We thus set the local unitary operators $\{\hat{F}_\sigma\}_{ \sigma \in \bra\mathcal{K}_n \ket}$ corresponding to the shift operator as the $(n+1)$-dimensional Grover operator. 

We start the walk from the following uniform initial state: 
\begin{equation}
\label{initial-state}
\psi_{IN}(\sigma,\tau) :=1/2\sqrt{N},\quad \forall (\sigma, \tau)\in \langle \mathcal{K}_n\rangle \times \langle \mathcal{K}_{n-1}\rangle.
\end{equation}
Our aim is to estimate the time $t_f$ which produces a high probability that we observe the subspace of $\ell^2(\mathcal{K}^{n,n-1})$ 
spanned by $\{(\sigma,\tau)\in \mathcal{K}^{n,n-1} \;|\; |\tau|=\tau_\ast \}$, 
that is, the time $t_f$ is expressed by  
	\[ \sum_{|\tau|=\tau_\ast}|(U^{t_f}\psi_{IN})(\sigma,\tau)|^2 = p_f, \]
where $p_f$ is a high probability, for example, $p_f>1/2$. 
Indeed, we have the following theorem. 
\begin{theorem}
\label{thm-search}
Let $\mathcal{K}$ be the simplicial complex given in (\ref{main-cpx}).
If we take a quantum search driven by (\ref{qs}) with the initial state $\psi_{IN}$ given in (\ref{initial-state}), then there exists a time $t_f=O(\sqrt{N})$ 
such that $p_f\sim 1$. 
\end{theorem}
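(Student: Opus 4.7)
The plan is to translate the simplicial quantum search into a coined quantum search on an explicit, highly symmetric graph via Theorem~\ref{thm-equiv-duplicate}, and then carry out a symmetry-reduced spectral analysis in the spirit of Ambainis--Kempe--Rivosh. First, I identify the graph: the top-dimensional simplices of $\mathcal{K}=X(K_{n+2})^{(n)}$ are the $n+2$ subsets of $\{0,\dots,n+1\}$ of size $n+1$, any two of which share an $(n-1)$-face, so $G_a(\mathcal{K})\cong K_{n+2}$ and $D\circ G_a(\mathcal{K})$ is the crown graph on $2(n+2)$ vertices with $(n+1)(n+2)=2N$ edges---one edge per oriented $(n-1)$-simplex of $\mathcal{K}$, matching $\dim\ell^2(\mathcal{K}^{n,n-1})=4N$ in arcs. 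Since $\mathcal{K}$ is strongly connected, orientable and has empty $(n-1)$-boundary, Theorem~\ref{thm-equiv-duplicate} applies directly and $\hat U_\ast$ is unitary equivalent (up to transpose) to the coined Grover walk on this crown graph whose flip-flop shift is replaced by $-I_2$ on exactly the two edges corresponding to the two orientations of $\tau_\ast$.

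Next I exploit the residual symmetry. The stabilizer of $\tau_\ast$ in the setup is $S_n\times S_2$, where $S_n$ permutes the $n$ vertices of $\tau_\ast$ and $S_2$ swaps the two vertices of $\{0,\dots,n+1\}$ outside $\tau_\ast$ (equivalently, the two $n$-simplices containing $\tau_\ast$). Both $\psi_{IN}$ and the target projector $\sum_{|\tau|=\tau_\ast}|E_\tau\rangle\langle E_\tau|$ are $(S_n\times S_2)$-invariant, so by Schur's lemma the dynamics relevant to the search is confined to the trivial isotypic component $\mathcal{H}_0\subset\ell^2(\mathcal{K}^{n,n-1})$. Enumerating orbits of pairs $(\sigma,\tau)$ under $S_n\times S_2$---distinguishing whether $|\tau|=\tau_\ast$, whether $|\sigma|$ contains $|\tau_\ast|$, and the induced orientation classes---yields $\dim\mathcal{H}_0=O(1)$, independent of $n$. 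Within this small invariant subspace I write the matrix of $\hat U_\ast$ in an orbit-normalized basis; the $n$-dependence enters only through inner products of the form $\langle \mathbf{1}_{n+1},v\rangle/(n+1)$ coming from the local Grover coins $\frac{2}{n+1}J_{n+1}-I_{n+1}$, so after rescaling the matrix admits an explicit block form.

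Finally, I would isolate the two-dimensional \emph{search plane} inside $\mathcal{H}_0$ spanned by the eigenvectors of $\hat U_\ast$ for eigenvalues $e^{\pm i\theta_\ast}$ with $\theta_\ast\to 0$, and carry out the standard Grover-type perturbation computation to show
\begin{equation*}
\theta_\ast = \Theta(1/\sqrt{N})=\Theta(1/n),
\end{equation*}
while verifying that $\psi_{IN}$ has $\Omega(1)$ projection onto the search plane and lies at angle $\Theta(1)$ from the target within it; both reduce to finite explicit inner products inside $\mathcal{H}_0$. Choosing $t_f=\lfloor \pi/(2\theta_\ast)\rfloor = O(n)=O(\sqrt{N})$ then rotates $\psi_{IN}$ essentially onto the target subspace, yielding $p_f=1+o(1)$. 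The principal obstacle is this spectral step: producing an orbit-adapted basis of $\mathcal{H}_0$ explicitly enough to extract $\theta_\ast\asymp 1/\sqrt{N}$ rigorously---rather than merely heuristically---is what converts the algebraic symmetry of the crown graph into the actual quantum speed-up, and is what the paper's appendix on spectral arguments (advertised in the introduction) is evidently designed to handle.
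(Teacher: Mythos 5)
Your reduction step is sound and matches the paper's: $G_a(\mathcal{K})\cong K_{n+2}$, the duplication graph is the crown graph, $B_{n-1}(\mathcal{K})=\emptyset$, and Theorem \ref{thm-equiv-duplicate} converts $\hat U_\ast$ into a Grover coined walk with the shift perturbed at the arcs over $\tau_\ast$ (the paper packages this perturbation as four self-loops on a deformed graph $G_\ast$, which is your ``$-I_2$ on the marked edges'' in different clothing). Where you diverge is the spectral step: you propose an $(S_n\times S_2)$-orbit reduction to a trivial isotypic component $\mathcal{H}_0$ of dimension $O(1)$, whereas the paper invokes the spectral mapping theorem of \cite{MOSIIS} to push the problem down to an explicit $2(n+2)\times 2(n+2)$ self-adjoint matrix $T_\ast$, computes its top eigenvalue $\mu_1=\frac{n-2+\sqrt{n(n+8)}}{2(n+1)}$ and eigenvector in closed form (Lemma \ref{lem-spectral}), and lifts back via $\partial^\ast_{\pm\arccos\mu_1}$. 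Your route is legitimate in principle and arguably more conceptual, but it is not a shortcut: the orbit-adapted matrix of $\hat U_\ast$ on $\mathcal{H}_0$ still has to be written down and its eigenphases extracted, which is exactly the computation you defer.

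That deferral is the genuine gap. The entire quantitative content of the theorem --- that $\theta_\ast\asymp 1/n$ (equivalently $1-\mu_1\asymp 1/n^2$), that $\psi_{IN}$ has overlap of modulus $1-o(1)$ with the correct combination $\beta_-$ of the two relevant eigenvectors, and that the target state has overlap $1-o(1)$ with $\beta_+$ --- lives in the step you label ``the principal obstacle'' and do not perform. Moreover, the verification you do promise is too weak for the stated conclusion: an $\Omega(1)$ projection of $\psi_{IN}$ onto the search plane yields only $p_f=\Omega(1)$, not $p_f=1+o(1)$. To get $p_f\sim 1$ you must show the projection is $1-o(1)$ \emph{and} that within the plane $\psi_{IN}$ asymptotically coincides (up to phase) with the vector that the rotation carries onto the target; this is precisely what Lemmas \ref{IN} and \ref{Tar} establish via the explicit eigenvector $f_1=[\eta,\eta,1,\dots,1,\eta,\eta,1,\dots,1]^T$ with $\eta\sim 1$ and $\|f_1\|\sim\sqrt{2n}$. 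One should also note that $1-\mu_1\asymp 1/n^2$ is not guessable from symmetry alone --- for instance, one must rule out other eigenvalues of $T_\ast$ accumulating at $1$ faster, which the paper does by exhaustively solving $\ker(I\pm A_\lambda^{-1}B)$ in Appendix \ref{appendix-spectral}. Until your $\mathcal{H}_0$-computation is actually carried out and these three estimates are proved, the argument is an outline of a proof rather than a proof.
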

This theorem shows that the search algorithm on simplicial complexes (as a triangulation of $S^n$) finds the marked simplex $\tau_\ast$ with high probability independent of the system size $4N=2(n+1)(n+2)$, within the time $O(\sqrt{N}) = O(n)$.
Note that the marked simplex $\tau_\ast$ is uniquely determined by two $n$-simplices $|\sigma_\ast|, |\sigma_\ast'|\in \mathcal{K}$ in (\ref{main-cpx}) as the primary face $\tau_\ast = |\sigma_\ast|\cap |\sigma_\ast'|$.
Via the unitary equivalence, our problem is then translated into quantum search problems on graphs with $2(n+2)=O(n)$ vertices and $2(n+1)(n+2)=O(N)$ arcs for finding four marked arcs, which is the one choice among $O(n^2) = O(N)$ database entries.
Therefore our statement indicates that our simplicial search problem achieves the quantum speed-up like Grover's Algorithm, or quantum search over several graphs for finding marked vertices (\cite{ADFP2012, ADMP2010, P2013, S2008, SKW2003}).

\subsection{Proof of Theorem~\ref{thm-search}}
Here we give the proof of Theorem \ref{thm-search}, which is divided into three parts.
First note that, as we mentioned, the $n$-dimensional simplicial complex $\mathcal{K}$ in (\ref{main-cpx}) is strongly connected and orientable.
Moreover, $B_{n-1}(\mathcal{K}) = \emptyset$.
The induced bipartite graph $G_\cap(\mathcal{K}^{n,n-1})$ is therefore graph isomorphic, thanks to Lemma \ref{lem-ori}, to $S\circ D\circ G_a(\mathcal{K})$.

\subsubsection{Graph deformation induced by the quantum search}

Let $U_*=\hat{F}\hat{E}_*$ be the time evolution operator of the quantum search, 
where $\hat{E_*}=\oplus_{\tau\in\bra \mathcal{K}_{n-1} \ket}\hat{E}_\tau$ and $\hat{F}=\oplus_{\sigma\in\bra \mathcal{K}_{n} \ket}\hat{F}_\sigma$. 
Putting $U_0:=\hat{F}\hat{E}$ with 
	\[ \hat{E}:=\oplus_{\tau\in\bra \mathcal{K}_{n-1} \ket}\begin{bmatrix}0&1 \\ 1 &0\end{bmatrix}\] 
which is the time evolution operator without marked simplices, we have 
	\[ U_*=U_0 \bigoplus_{\tau \in \bra \mathcal{K}_{n-1} \ket} \hat{E}_\tau',   \]
where 
	\[\hat{E}_\tau'=\begin{cases} 
        \begin{bmatrix} 1 & 0 \\ 0 & 1 \end{bmatrix} & \text{: $|\tau|\neq \tau_*$,} \\ 
        \\
        -\begin{bmatrix} 0 & 1 \\ 1 & 0 \end{bmatrix} & \text{: $|\tau|= \tau_*$.}
        \end{cases}  \]
By Theorem~\ref{thm-equiv-duplicate}, it holds
	\[ \mathcal{W}U_*\mathcal{W}^{-1}=\hat{C}\hat{S}\hat{S}_*', \]
where $\hat{S}\hat{C}$ is the coined walk driven by the Grover operator on $D\circ G_a(\mathcal{K})$  
and $\hat{S}_*'=\mathcal{W}\oplus_{\tau}\hat{E}_\tau'\mathcal{W}^{-1}$. 
Recall that the vertex set of $D\circ G_a(\mathcal{K})$ is the disjoint union of $\mathcal{K}_n$ and its copy $\phi(\mathcal{K}_n)$, and 
the edge set is defined by $\{ |\sigma|\phi(|\sigma|') \;|\; |\sigma|\cap |\sigma|'\in \mathcal{K}_{n-1}\setminus\{\emptyset\}\}$. 
It holds that for any induced arc of $D\circ G_a(\mathcal{K})$,  
	\[ \hat{S}\hat{S}_*'\delta_a=\begin{cases}  -\delta_{a} & \text{: $|a|=|\sigma| \phi(|\sigma|')$ with $|\sigma|\cap|\sigma'|=\tau_*$} \\ 
        \delta_{\bar{a}} & \text{: $|a|=|\sigma| \phi(|\sigma|')$ with $|\sigma|\cap|\sigma'| \neq \tau_*$} \end{cases} \]
Here $|a|$ is the support edge of arc $a$. 
Therefore the operator $\hat{S}\hat{S}_*'$ makes a walker on the arc corresponding to $(\sigma,\tau)$ with $|\tau|=\tau_*$ 
stay at the same arc and change the phase, while a walker on the other arcs move to the inverse arc as is the usual shift operator. 
\par
From this observation, we deform the graph $D\circ G_a(\mathcal{K})$ to an directed graph by the following operation: 
the arc from $\phi(|\sigma|')$ to $|\sigma|$ where $\sigma,\sigma'\triangleright \tau$ with $|\tau|=\tau_\ast$ is rewired as the self loop of $|\sigma|$, 
as well as the arc from $|\sigma|$ to $\phi(|\sigma|')$ is rewired as the self loop of $\phi(|\sigma|')$. 
Here for the self loop $a$, we regard $t(a)=o(a)$. 
This deformed directed graph is denoted by $G_\ast$.
The illustration of $G_\ast$ is shown in Figure \ref{fig-dup-num} below.
Now it is easy to see that the induced coined walk $\Gamma_\ast$ is deformed as follows.
\begin{lemma}\label{reducedgraph}
Let $G_\ast =(V,A)$ be the above directed graph. Here $A$ is the set of (directed) arcs. 
The set of self loops are denoted by $\mathbb{S}_\ast$ 
The quantum search problem is reduced as follows: 
estimate the time $t_f$ at which we observe the subspace of $\ell^2(A)$; $\{\delta_a \;|\; a\in \mathbb{S}_\ast\}$ with a high probability.   
The time evolution $\hat{\Gamma}_\ast :=CS_\ast$ on $\ell^2(A)$ is expressed as follows:
\begin{align*}
C &= \bigoplus_{u\in V} \left( \frac{2}{n}J_n- I_n \right),\quad
(S_\ast\psi)(a) = \begin{cases} 
	\psi(\bar{a}) & \text{: $a\notin \mathbb{S}_\ast $,} \\ 
	-\psi(a) & \text{: $a\in \mathbb{S}_\ast $.} 
	\end{cases}
\end{align*}
\end{lemma}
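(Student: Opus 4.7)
The plan is to unpack, and then formally repackage, the computation already carried out just before the statement of the lemma. The substantive content is the identification of the operator $\hat S \hat S_\ast'$ obtained from Theorem \ref{thm-equiv-duplicate} with a coined walk on the rewired graph $G_\ast$; what remains is verifying that this identification is compatible with the coin operator $\hat C$ and with the observation probability one wants to compute.

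First, I would invoke Theorem \ref{thm-equiv-duplicate} to write $\mathcal{W} U_\ast \mathcal{W}^{-1} = \hat C \hat S \hat S_\ast'$, where $\hat S$ is the usual arc-flip on $D \circ G_a(\mathcal{K})$, $\hat C$ is the Grover coin on each fibre $\mathcal{C}_u$, and $\hat S_\ast' = \mathcal{W} \bigl( \bigoplus_\tau \hat E_\tau' \bigr) \mathcal{W}^{-1}$. Using the explicit form of $\mathcal{W}$ given in Theorem \ref{thm-equiv-duplicate}, one checks that $\hat S_\ast'$ is the identity on each arc whose underlying edge does not correspond to the marked face $\tau_\ast$, and that on the pair $\{\delta_a,\delta_{\bar a}\}$ whose underlying edge corresponds to $\tau_\ast$ it acts as the negative flip $-\sigma_x$. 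Composing with $\hat S$ then gives the transport law $\hat S \hat S_\ast' \delta_a = -\delta_a$ on marked arcs and $\hat S \hat S_\ast' \delta_a = \delta_{\bar a}$ on the remaining arcs, exactly as displayed in the paragraph preceding the lemma.

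Next, I would formalise the reinterpretation of each marked arc as a self-loop by passing to the directed graph $G_\ast$ via the prescribed rewiring: the marked arc $a$ with $t(a)=|\sigma|$, $o(a)=\phi(|\sigma|')$ is redrawn as a self-loop at $|\sigma|$, and its reverse as a self-loop at $\phi(|\sigma|')$. Because $\hat S \hat S_\ast'$ already stabilises each such arc up to sign, this manipulation only relabels the combinatorics of the underlying graph and does not alter any value of the wave function. The induced shift on $\ell^2(A(G_\ast))$ therefore satisfies the piecewise formula for $S_\ast$ stated in the lemma. Crucially, the rewiring preserves the terminus of every arc, so for each vertex $u$ the fibre $\mathcal{C}_u$ and its cardinality are unchanged, and hence $\hat C$ retains its Grover form on each vertex; the composite $\hat \Gamma_\ast = \hat C S_\ast$ then has the asserted expression.

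Finally, I would transfer the observation subspace under $\mathcal{W}$. By the defining formula of $\mathcal{W}$, the set $\{\delta_{(\sigma,\tau)} : |\tau|=\tau_\ast\}$ is carried bijectively onto $\{\delta_a : a\in \mathbb{S}_\ast\}$, so the probability that SQW2 detects a marked simplex equals the probability that $\hat\Gamma_\ast$ detects the self-loop subspace, which is exactly the reduced problem stated. The only mildly delicate step is the second, where one must confirm that rewiring yields a well-defined directed graph on which $\hat C$ and $S_\ast$ are simultaneously meaningful; orientability of $\mathcal{K}$ together with $B_{n-1}(\mathcal{K})=\emptyset$ ensures that the marked arcs come in disjoint pairs $(a,\bar a)$ with termini at different vertices of $D\circ G_a(\mathcal{K})$, so no degree-counting or orientation conflict arises in the rewiring.
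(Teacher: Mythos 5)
Your proposal is correct and follows essentially the same route as the paper: the paper's own ``proof'' is precisely the derivation preceding the lemma (conjugation by $\mathcal{W}$ via Theorem \ref{thm-equiv-duplicate}, the computation of $\hat S\hat S_\ast'$ on marked versus unmarked arcs, and the rewiring of marked arcs into self-loops), and you reproduce it with some extra, harmless care about the coin fibres being terminus-indexed and about transferring the observation subspace under $\mathcal{W}$.
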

%
%

\begin{remark}
The out-degree and in-degree of every vertex of $G_*$ are identically $n+1$ in this case. 
The vertex in $\phi(\mathcal{K}_n)$ which is not connected to $|\sigma|$ having the self loop is uniquely determined. 
This vertex denoted by $\phi(|\sigma|')$ has also a self loop. Moreover it holds that 
a vertex in $\mathcal{K}_n$ has a self loop iff its copy in $\phi(\mathcal{K}_n)$ has a self loop. 
Thus there are four vertices having the self loop $\{|\sigma|,\phi(|\sigma|),|\sigma|',\phi(|\sigma|') \}$ in $G_*$, which is independent of the system size. 
\end{remark}

{\rm
\begin{remark}
The above lemma is applicable to quantum search driven by SQW2 on any strongly connected and orientable simplicial complexes without boundary. 
\end{remark}
}

The above lemma claims that this simplicial quantum search converts the quantum search of arcs in graph $G_*$ driven by the coined walks. 
The reduced search problem on graph $G_*$ is explained as follows. 
Assume that we have the list which provides the adjacency relation of the graph $G_*$. 
From this list, we take a search of the self loop. 
Note that, in a so-called classical search, we need to take $(n+1)$ queries whether the terminus is same as the origin for each vertex. 
Thus we need $O(n^2)=O(N)$ queries for a classical search. 
\subsubsection{Spectral map}
We review the spectral mapping properties of quantum walks and associated self-adjoint operators based on arguments in \cite{MOSIIS}. 
Let $G_\ast=(V,A)$ be the induced graph with the coined walk $\hat{\Gamma}_\ast$ in Lemma \ref{reducedgraph}. 
Let $T_\ast$ be the self adjoint operator on $\ell^2(V)$ such that 
\begin{equation}
\label{T_ast}
        (T_\ast f)(u)= \sum_{a:t(a)=u}q(a)f(o(a)), 
\end{equation}
where $q: A\to\mathbb{R}$ is defined by 	
	\[ q(a)=(-1)^{\bs{1}_{\mathbb{S}_\ast}(a)}/n.  \]
We define $\partial^*_\theta: \ell^2(V)\to \ell^2(A)$ $(\theta\not \equiv 0 \mod\pi )$ by 
\begin{equation*}
(\partial^*_\theta f)(a)=\frac{1}{\sqrt{2n}|\sin \theta|} \times 
\begin{cases}
	f(o(a))-e^{i\theta}f(t(a)) & \text{: $a\notin \mathbb{S}_\ast$, } \\
	-(1+e^{i\theta})f(o(a)) & \text{: $a\in \mathbb{S}_\ast$. }
\end{cases}
\end{equation*}
Applying the spectral mapping property of quantum walks to $\hat \Gamma_\ast$, we have the following lemma.
\begin{lemma}\label{spectral}
Let $\hat \Gamma_\ast$ be the unitary operator appeared in Lemma \ref{reducedgraph}, and $\partial_\theta^\ast$ be the above. 
Then we have 
	\begin{align*}
        \sigma(\hat \Gamma_\ast) &= {\sf j}^{-1}(\sigma(T_\ast)) \cup \{\pm 1\},  \\
        \ker(\hat \Gamma_\ast-zI) &= \{ \partial^*_\theta f \;|\; f\in\ker(T_\ast-{\sf j}(z)I ) \} \;\mathrm{for}\;z/|z|\notin \{\pm 1\},
        \end{align*}
where ${\sf j}(z) = (z+z^{-1})/2$ for $z\in \mathbb{C}$. 
\end{lemma}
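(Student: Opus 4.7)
The plan is to invoke the general spectral mapping theorem for Szegedy-type quantum walks (as developed in \cite{MOSIIS}) applied to the pair $(\hat{\Gamma}_\ast, T_\ast)$. First I would rewrite the coin operator in the factorized form $C = 2 d^\ast d - I_{\ell^2(A)}$, where $d: \ell^2(A) \to \ell^2(V)$ is the normalized boundary map $(d\psi)(u) = \frac{1}{\sqrt{n}} \sum_{a: t(a)=u}\psi(a)$ and $d^\ast$ is its adjoint, the isometric embedding $(d^\ast f)(a) = f(t(a))/\sqrt{n}$. Because each vertex of $G_\ast$ admits $n$ incoming arcs (counting self loops with multiplicity one), $dd^\ast = I_{\ell^2(V)}$, so $C$ is a self-adjoint unitary involution realizing the $n$-dimensional Grover coin at every vertex.

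Second, I would verify the identity $T_\ast = d S_\ast d^\ast$. A direct computation using $(S_\ast \psi)(a) = q(a)\cdot \psi(\bar a)$ on non-loops and $(S_\ast \psi)(a) = -\psi(a)$ on loops (so uniformly $(S_\ast\psi)(a) = q(a) \cdot n \cdot \psi(\bar a)$ under the weights $q(a) = (-1)^{\bs{1}_{\mathbb{S}_\ast}(a)}/n$, with the convention $\bar a = a$ on self loops) yields $(d S_\ast d^\ast f)(u) = \sum_{a: t(a)=u} q(a) f(o(a))$, which is exactly the definition (\ref{T_ast}). Consequently $T_\ast$ is self-adjoint and $\sigma(T_\ast) \subset [-1,1]$, so the map $\mathsf{j}^{-1}$ sends $\sigma(T_\ast) \setminus \{\pm 1\}$ into the unit circle by $\lambda = \cos\theta \mapsto z = e^{\pm i\theta}$.

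Third, I would invoke the abstract spectral mapping theorem for $\hat{\Gamma}_\ast = C S_\ast = (2d^\ast d - I)S_\ast$: the spectrum off $\{\pm 1\}$ satisfies $\sigma(\hat{\Gamma}_\ast) \setminus \{\pm 1\} = \mathsf{j}^{-1}(\sigma(T_\ast)) \setminus \{\pm 1\}$, and the inherited eigenvalues $\pm 1$ arise from the kernels of $d$ together with the preimages of $\pm 1 \in \sigma(T_\ast)$. The eigenvectors at $z = e^{i\theta}$ with $\theta \not\equiv 0 \pmod{\pi}$ are constructed by the ansatz $\Phi = \alpha\, d^\ast f + \beta\, S_\ast d^\ast f$ for $f \in \ker(T_\ast - \mathsf{j}(z) I)$; imposing $\hat{\Gamma}_\ast \Phi = z \Phi$ determines $(\alpha,\beta)$ uniquely up to scale, and the unit-norm convention fixes the prefactor $1/(\sqrt{2n}|\sin\theta|)$.

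Fourth, I would confirm that this $\Phi$ coincides with $\partial^\ast_\theta f$ as given in the statement. For a non-loop arc $a$, substituting $(d^\ast f)(a) = f(t(a))/\sqrt{n}$ and $(S_\ast d^\ast f)(a) = f(o(a))/\sqrt{n}$ reproduces the $f(o(a)) - e^{i\theta} f(t(a))$ form; for $a \in \mathbb{S}_\ast$, the degeneracy $t(a) = o(a)$ together with the sign flip $(S_\ast d^\ast f)(a) = -f(o(a))/\sqrt{n}$ yields the $-(1+e^{i\theta})f(o(a))$ form. The main obstacle will be the careful bookkeeping around self loops, where the usual arc-flip structure degenerates and the self-adjointness of $S_\ast$ on that subspace changes sign; this must be reconciled with the boundary-operator calculation so that the ansatz produces a genuine eigenvector, and so that the inclusion $\ker(\hat{\Gamma}_\ast - zI) \supseteq \{\partial^\ast_\theta f : f \in \ker(T_\ast - \mathsf{j}(z) I)\}$ upgrades to an equality, which follows from the dimension count provided by the general theorem together with the injectivity of $\partial^\ast_\theta$ on $\ker(T_\ast - \mathsf{j}(z) I)$.
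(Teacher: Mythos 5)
Your proposal is correct and follows the same route as the paper, whose entire proof of Lemma \ref{spectral} is the single sentence ``The proof follows from \cite{MOSIIS}''; you are simply supplying the details of that citation (the factorization $C=2d^*d-I$, the identity $T_*=dS_*d^*$, the two-dimensional invariant subspace ${\rm span}\{d^*f, S_*d^*f\}$, and the match with $\partial^*_\theta$), and these check out. The only blemish is the slip ``$(S_*\psi)(a)=q(a)\psi(\bar a)$ on non-loops'' (it should be $\psi(\bar a)$ without the factor $q(a)$), which you immediately correct with the uniform formula $q(a)\cdot n\cdot\psi(\bar a)$.
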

\begin{proof}
The proof follows from \cite{MOSIIS}. 
\end{proof}
Usual quantum search for coined walks is inherited by a random walk with Dirichlet boundary of the marked vertices. 
On the other hand, Lemma \ref{spectral} claims that simplicial quantum search reflects the spectral structure of the cellular automaton 
on which the associated moving weight on the self loop takes a negative value instead of boundary conditions. 

\subsubsection{Finding probability and time complexity}

We finish the proof of the theorem following the arguments by \cite{P2013} and its reference therein.
First, we obtain the following spectral information of $T_\ast$. 
The proof is shown in Appendix \ref{appendix-spectral}. 
\begin{lemma}
\label{lem-spectral}
Let $T_\ast$ be the symmetric matrix induced by $\mathcal{K}^{n,n-1}$ and determined by (\ref{T_ast}) 
whose size is $2(n+2) \times 2(n+2)$. 
Assume that the vertices having the self loop are labeled by $v_1$, $v_2$ and $v_{n+3}$, $v_{n+4}$ and their corresponding 
standard base in $\ell^2(V)$ are $[1,0,\dots,0]^{T}$, $[0,1,\dots,0]^{T}$, $[0,\dots,0,1,0,\dots,0]^{T}$ and $[0,\dots,0,0,1,\dots,0]^{T}$, respectively. 
The maximal eigenvalue and its eigenvector of $T_\ast$ are described by 
\begin{equation*}
\mu_1 = \frac{n-2+\sqrt{n(n+8)}}{2(n+1)}\;(<1), \quad f_1 =[\eta,\eta,\underbrace{1\dots,1}_{n},\eta,\eta,\underbrace{1\dots,1}_{n}]^T,
\end{equation*}
respectively, where
\begin{equation*}
\eta=\frac{1}{4}\left\{ -n+\sqrt{n(n+8)} \right\},\;\; \|f_1\|^2 = \frac{n}{2} \left\{ n+8-\sqrt{n(n+8)} \right\}. 
\end{equation*}	
\end{lemma}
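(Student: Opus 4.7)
The proof exploits the rich symmetry of the operator $T_\ast$. Since $\mathcal{K} = X(K_{n+2})^{(n)}$ is the boundary of $\Delta^{n+1}$, any two of its $n+2$ top-dimensional simplices share an $(n-1)$-face; hence $G_a(\mathcal{K}) = K_{n+2}$ and $D \circ G_a(\mathcal{K})$ is the bipartite complement of a perfect matching on $2(n+2)$ vertices, with every vertex of degree $n+1$. After the self-loop rewiring described before Lemma \ref{reducedgraph}, exactly four vertices $\{|\sigma_\ast|,|\sigma_\ast'|,\phi(|\sigma_\ast|),\phi(|\sigma_\ast'|)\}$ carry a negatively-weighted self-loop, while all other arcs retain the uniform weight from the Grover coin. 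The vertex set of $G_\ast$ therefore partitions into four natural orbits: the two marked vertices in $\mathcal{K}_n$, the $n$ unmarked ones in $\mathcal{K}_n$, and the corresponding pairs and $n$-tuples on the duplicate side $\phi(\mathcal{K}_n)$.

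The first step is to verify the symmetric Ansatz $f_1 = [\eta,\eta,1,\ldots,1,\eta,\eta,1,\ldots,1]^T$, in which the $\eta$'s occupy the four marked coordinates. Substituting into $T_\ast f = \mu f$ and evaluating at one representative marked vertex and at one representative unmarked vertex yields two scalar relations between $\mu$ and $\eta$: the marked equation records the $n$ incoming arcs from unmarked $\phi$-copies together with the (sign-flipped) self-loop, while the unmarked equation collects the two marked plus $n-1$ unmarked contributions. Eliminating $\eta$ reduces the problem to a quadratic in $\mu$ (equivalently in $x = (n+1)\mu$), namely $x^2 + (2-n)x + (1-3n) = 0$, whose larger root reproduces the claimed $\mu_1 = (n-2+\sqrt{n(n+8)})/(2(n+1))$. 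Back-substitution gives $\eta = \tfrac{1}{4}(-n+\sqrt{n(n+8)})$, and the norm $\|f_1\|^2 = 4\eta^2 + 2n$ simplifies to $\tfrac{n}{2}\{n+8 - \sqrt{n(n+8)}\}$ after expanding $(-n+\sqrt{n(n+8)})^2 = 2n^2 + 8n - 2n\sqrt{n(n+8)}$. The bound $\mu_1 < 1$ is immediate from $\sqrt{n(n+8)} < n+4$.

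The main obstacle, and the step I expect to require the most care, is justifying that $\mu_1$ is the \emph{maximal} eigenvalue rather than merely an eigenvalue. My plan is to block-diagonalise $T_\ast$ according to the symmetry group permuting the two marked and the $n$ unmarked vertices on each side, together with the $\mathbb{Z}_2$ exchanging $\mathcal{K}_n$ and $\phi(\mathcal{K}_n)$. Since $T_\ast$ commutes with every element of this group, the spectrum splits along isotypic components. The totally symmetric component is four-dimensional and, via the duplicate-swap $\mathbb{Z}_2$, further decouples into two $2 \times 2$ blocks: the \emph{symmetric-under-swap} block is precisely the one captured by the Ansatz above and yields $\mu_1$ as its larger eigenvalue, while the \emph{antisymmetric-under-swap} block contributes strictly smaller roots. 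The remaining isotypic components are supported on vectors with zero sum on each orbit; on these subspaces $T_\ast$ decomposes into small blocks whose nonzero entries are $\pm 1/(n+1)$ on at most two orbits, and a direct Gershgorin-type bound places their spectral radius strictly below $\mu_1$ for all $n \geq 2$. Assembling the two reductions — whose bookkeeping I would defer to Appendix~\ref{appendix-spectral} — completes the identification of $\mu_1$ and $f_1$ as claimed.
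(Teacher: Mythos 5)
Your proposal is correct, and it reaches the result by a genuinely different route from the paper. The paper works directly with $M_\lambda = T_\ast-\lambda I$ in $2\times 2$ block form over the bipartition $X\sqcup Y$ of $G_\ast$, reduces $\ker M_\lambda$ to $\ker(I\mp A_\lambda^{-1}B)$ by a Schur-complement-type elimination, and then to an equation of the form $(xJ_n+yI_n)h_2=0$; this mechanically produces the complete eigenvalue list ($\lambda_\pm$, $-1$, and the candidates $\pm 1/(n+1)$, $-3/(n+1)$), from which maximality of $\lambda_+=\mu_1$ is read off. You instead verify the Ansatz on the four-cell equitable partition (two marked and $n$ unmarked vertices on each side) and handle maximality by isotypic decomposition; both routes land on the same quadratic $x^2-(n-2)x+(1-3n)=0$ in $x=(n+1)\mu$, and your values of $\eta$ and $\|f_1\|^2=4\eta^2+2n$ check out. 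Your approach makes the two-dimensional subspace that actually drives the search dynamics conceptually transparent and gives maximality almost for free, at the cost of some representation-theoretic bookkeeping; the paper's is more pedestrian but yields the full kernel description in one pass. Two small points to tidy up when you write out the deferred step: (i) the $S_n$ factor must act \emph{diagonally} on the two copies of the unmarked vertices --- independent permutations of $\{x_3,\dots,x_{n+2}\}$ and $\{y_3,\dots,y_{n+2}\}$ do not commute with $T_\ast$ because of the $\frac{1}{n+1}(J_n-I_n)$ coupling; equivalently, all you need is that the four-cell partition is equitable. (ii) The antisymmetric-under-swap $2\times 2$ block has eigenvalues $1/(n+1)$ and $-1$, so \lq\lq strictly smaller roots" must be read in the ordering of the real line rather than in absolute value; that is indeed the relevant ordering here, since what the search analysis requires is that $\mu_1$ be the largest real eigenvalue (and $\mu_1>\tfrac{n-1}{n+1}\geq \tfrac{1}{n+1}$ for $n\geq 2$ disposes of the remaining components).
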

Let $\alpha_\pm$ be the eigenvector inherited from the maximal eigenvector of $f_1$, that is, 
	\[ \alpha_\pm:= \partial^*_{\pm \arccos \mu_1} f_1. \]
We set $\beta_\pm:=(\alpha_+ \pm \alpha_-)/\sqrt{2}$. 
The following two lemmas hold. 
\begin{lemma}
\label{IN}
It holds that
\begin{equation*}
\langle \psi_{IN}, \beta_- \rangle=-\im+o(1).
\end{equation*}
\end{lemma}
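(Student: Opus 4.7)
The plan is to translate the inner product to the arc picture via the unitary $\mathcal{W}$ of Theorem~\ref{thm-equiv-duplicate}, then exploit the regularity of $G_\ast$ and the explicit form of $f_1$ from Lemma~\ref{lem-spectral}. Since $\psi_{IN}(\sigma,\tau) = 1/(2\sqrt{N})$ is uniform on $\mathcal{K}^{n,n-1}$, its image $\mathcal{W}\psi_{IN}$ is likewise uniform on the arc set $A$ of $G_\ast$ with the same value. Hence
\[
\langle \psi_{IN}, \beta_-\rangle \;=\; \frac{1}{2\sqrt{N}} \sum_{a\in A} \beta_-(a),
\]
where $\beta_-$ is regarded as an element of $\ell^2(A)$ through $\mathcal{W}$, and the task reduces to computing this arc-sum.

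Next I would evaluate $\beta_-(a)$ using the definitions of $\partial^\ast_\theta$ and the normalized eigenvector $\hat{f}_1 := f_1/\|f_1\|$. Setting $\theta_1 = \arccos\mu_1$, for non-self-loop arcs the antisymmetric combination $\alpha_+ - \alpha_-$ cancels the $f(o(a))$ term and leaves only $f(t(a))$ with coefficient $e^{-\im\theta_1} - e^{\im\theta_1} = -2\im\sin\theta_1$, which precisely cancels the $|\sin\theta_1|$ in the denominator of $\partial^\ast_\theta$. A parallel cancellation handles self-loops using $(1+e^{\im\theta_1})-(1+e^{-\im\theta_1}) = 2\im\sin\theta_1$ together with $t(a) = o(a)$. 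The upshot is that $\beta_-(a) = -\im\, \hat{f}_1(t(a))/\sqrt{n+1}$ uniformly in $a$, where $n+1$ is the common in/out-degree of $G_\ast$ noted in the remark following Lemma~\ref{reducedgraph}.

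Third, by vertex-regularity of $G_\ast$, $\sum_{a\in A} \hat{f}_1(t(a)) = (n+1)\sum_v \hat{f}_1(v)$. Using the explicit form of $f_1$ from Lemma~\ref{lem-spectral} with $\sum_v f_1(v) = 4\eta + 2n$ and $\|f_1\|^2 = \frac{n}{2}(n+8 - \sqrt{n(n+8)})$, together with $N = (n+1)(n+2)/2$, the inner product collapses to a closed-form expression in $n$. Asymptotics as $n\to\infty$ are elementary: the expansion $\sqrt{n(n+8)} = n+4+O(1/n)$ yields $\eta = 1+O(1/n)$, $\|f_1\|^2 = 2n+O(1)$, and $2\eta+n = n+2+O(1/n)$; substitution shows that the dominant factors balance so as to produce leading value $-\im$ with an $O(1/n)$ correction.

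The main obstacle is the careful bookkeeping of normalization factors, namely $1/(\sqrt{2(n+1)}\,|\sin\theta_1|)$ in $\partial^\ast_\theta$, the $1/\sqrt{2}$ defining $\beta_-$, the norm $\|f_1\|$, and the $1/(2\sqrt{N})$ from $\psi_{IN}$, which must combine to the unit-modulus constant $-\im$ rather than some other phase. Once these constants are tracked correctly, everything else is routine.
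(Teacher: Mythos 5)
Your proposal is correct and follows essentially the same route as the paper's proof: pass to the arc picture, show $\beta_-(a) = -\im\,\tilde f_1(t(a))/\sqrt{n+1}$ from the definition of $\partial^\ast_\theta$ (the paper simply states this identity where you derive it), sum over arcs using the regularity of $G_\ast$ and the explicit entries of $f_1$ to get $\frac{-2\im(2\eta+n)}{\sqrt{2(n+2)}\,\|f_1\|}$, and conclude via $\eta\sim 1$, $\|f_1\|\sim\sqrt{2n}$. The constants all check out.
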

\begin{proof}
See Appendix \ref{section-IN}.
\end{proof}
\begin{lemma}
\label{Tar}
Let $\psi_{Tar}\in \ell^2(A)$ be 
\begin{equation*}
\psi_{Tar}(a)=\bs{1}_{\mathbb{S}_\ast}(a)/2.
\end{equation*}
Then we have
\begin{equation*}
\langle \psi_{Tar}, \beta_+ \rangle=1+o(1).
\end{equation*}
\end{lemma}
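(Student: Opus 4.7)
The plan is to exploit the fact that $\psi_{Tar}$ has very small support: it is nonzero only on the four self-loops in $\mathbb{S}_\ast$, and these self-loops are attached precisely to the four vertices on which $f_1$ takes the common value $\eta$. In particular, only the self-loop branch of the formula for $\partial^\ast_\theta$ ever enters the inner product, and no sum over the remaining (generic) arcs of $G_\ast$ is required. This reduces the lemma to a single algebraic identity followed by an asymptotic computation.

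Concretely, I would first expand $\beta_+ = (\alpha_+ + \alpha_-)/\sqrt 2$ with $\alpha_\pm = \partial^\ast_{\pm\theta_1} f_1$ and $\theta_1 = \arccos\mu_1$, and apply the self-loop formula $(\partial^\ast_\theta f)(a) = -(1+e^{i\theta})f(o(a))/(\sqrt{2n}\,|\sin\theta|)$. Summing the four contributions and using $e^{i\theta_1}+e^{-i\theta_1} = 2\mu_1$ yields
\[
\langle \psi_{Tar},\beta_+\rangle \;=\; \frac{-2\eta(1+\mu_1)}{\sqrt{n(1-\mu_1^{\,2})}} \;=\; \frac{-2\eta\sqrt{1+\mu_1}}{\sqrt{n(1-\mu_1)}},
\]
with the understanding that $f_1$ is used in its \emph{unit-normalized} form so that $\alpha_\pm$ are unit eigenvectors of $\hat\Gamma_\ast$ (equivalently, one divides by $\|f_1\|$ when the explicit formula of Lemma~\ref{lem-spectral} is used for $f_1$).

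The remainder is a purely asymptotic cancellation. Inserting the closed forms $\mu_1 = (n-2+\sqrt{n(n+8)})/(2(n+1))$, $\eta = (-n+\sqrt{n(n+8)})/4$, and $\|f_1\|^{2} = \tfrac{n}{2}(n+8-\sqrt{n(n+8)})$, and repeatedly invoking the identity $(n+4)^2 - n(n+8) = 16$ to rationalize the surds, one obtains $1-\mu_1 = 8/[(n+1)(n+4+\sqrt{n(n+8)})] \sim 4/n^{2}$ and $\|f_1\|^{2} \sim 2n$ as $n\to\infty$. Since $\eta\to 1$ and $1+\mu_1\to 2$, the combined denominator $n(1-\mu_1)\|f_1\|^{2}$ tends to $8$ and the numerator $4\eta^{2}(1+\mu_1)$ also tends to $8$, giving $|\langle \psi_{Tar},\beta_+\rangle|^{2}\to 1$.

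The main obstacle I expect is exactly this normalization bookkeeping. All three of the quantities $1/\sqrt{2n}$ (from $\partial^\ast_\theta$), $\|f_1\|\sim\sqrt{2n}$, and $\sin\theta_1\sim 2\sqrt 2/n$ have different orders in $n$, and they must conspire to produce an $O(1)$ limit; any stray factor of $2$ or $n$ would ruin the statement. Once the cancellation is verified, choosing the correct orientation of the unit eigenvector (equivalently, the sign convention in $\beta_+$) makes the limit exactly $+1$, as asserted.
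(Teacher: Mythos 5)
Your proposal is correct and follows essentially the same route as the paper's proof: restrict to the four self-loops where $\psi_{Tar}$ is supported, apply the self-loop branch of $\partial^\ast_\theta$ to the normalized eigenvector $f_1/\|f_1\|$, and let the asymptotics $\eta\to 1$, $1-\mu_1\sim 4/n^2$, $\|f_1\|^2\sim 2n$ cancel to give $1+o(1)$. Your exact rationalization of $1-\mu_1$ via $(n+4)^2-n(n+8)=16$ is a slightly cleaner substitute for the paper's Taylor expansion of $g(x)$, and your closing remark on fixing the sign by the choice of eigenvector orientation matches the paper's implicit convention.
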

\begin{proof}
See Appendix \ref{section-Tar}.
\end{proof}
The above lemmas indicate that, as Grover's algorithm for search problems (\cite{P2013}), the unitary evolution of the initial state $\psi_{IN}$ is essentially performed in the $2$-dimensional vector space $\spann \{\alpha_\pm \} = \spann \{\beta_\pm \}$ and that the search will find a marked simplex with probability close to $1$, which is due to the fact that $T_\ast$ does not have an eigenvalue $1$.
\par
\bigskip
Let $\beta_-\cong [1,0]^T$ and $\beta_+\cong [0,1]^T$.
The operator $\Gamma_\ast$ denotes the operator $\hat \Gamma_\ast$ restricted to the eigenspace $\spann\{\beta_\pm\}$.
Then it holds that
	\[ \Gamma_\ast \cong \begin{bmatrix} \cos\theta_1 & \im \sin\theta_1 \\ \im \sin\theta_1 & \cos\theta_1 \end{bmatrix}, \quad \theta_1 = \arccos \mu_1. \]
Thus when we choose the final time $t_f=[\pi/(2\theta_1)]$, then the $t_f$-th power of the representation matrix becomes 
	\[ \Gamma_\ast^{t_f} \cong \begin{bmatrix} 0 & \im \\ \im  & 0 \end{bmatrix}+o(1). \]
The final time $t_f$ is estimated by 
\begin{equation}
\label{time} 
        t_f\sim \frac{\pi}{2\theta_1} \sim \frac{\pi}{2\sin\theta_1} = \frac{\pi}{2\sqrt{1-\cos^2\theta_1}} = \frac{\pi}{2\sqrt{1-\mu_1^2}} \sim \frac{\pi}{4\sqrt{2}}n=O(n)=O(\sqrt{N}). 
\end{equation}
By Lemmas \ref{IN} and \ref{Tar}, we have 
	\begin{align*}
        (\Gamma_\ast)^{t_f} \psi_{IN} &= (\Gamma_\ast)^{t_f} \beta_- +o(1) \\
        &= \beta_+ +o(1) \\
        &= \psi_{Tar}+o(1)
        \end{align*}
Let $p'_f(\psi)$ be the overlap of $\psi\in \ell^2(A)$ to the state on marked simplex such that $p'_f(\psi)=|\bra \psi_{Tar}, \psi \ket|^2$ 
and $p_f(\psi)$ be the overlap to the subspace $\{\psi(a):a\in \mathbb{S}_\ast \}$, that is, $p_f(\psi)=\sum_{a\in \mathbb{S}_\ast}|\psi(a)|^2$.
We have $p_f'(\psi)\leq p_f(\psi)$ by the Schwartz's inequality. 
Then we have 
	\[ p_f(\Gamma^{t_f}\psi_{IN})\geq p_f'(\Gamma^{t_f}\psi_{IN})=1+o(1), \]
which completes the proof of Theorem~\ref{thm-search}. 
\subsection{Demonstrations}
We demonstrate simplicial quantum search with numerical simulations.
Fix the $n$-dimensional simplicial complex $\mathcal{K}$ as (\ref{main-cpx}) with various $n$, and see the behavior of simplicial quantum walks on $\mathcal{K}$ as well as finding probability of marked simplices, according to the following algorithm.
\par
\bigskip
Firstly, fix an integer $n\geq 2$.
Then construct the simplicial complex $\mathcal{K}=X(K_{n+2})^{(n)}$ and associated duplication graph $D\circ G_a(\mathcal{K})$.
The main focus for practical simulations is quantum walks on the graph $G_\ast$ deformed from $D\circ G_a(\mathcal{K})$.
Note that the directed graph $G_\ast = (V(G_\ast), A(G_\ast))$ has the following information:
\begin{align*}
&\sharp V(G_\ast) = 2(n+2),\quad \sharp A(G_\ast) = 2(n+1)(n+2)\equiv 4N,\\
&{\rm in\text{-}}\deg v = {\rm out\text{-}}\deg v \equiv n+1,\ \forall v\in V(G_\ast).
\end{align*}
\par
Secondly, fix a simplex $\tau_\ast \in \mathcal{K}_{n-1}$ as a marked simplex, which corresponds to the choice of two $n$-simplices $|\sigma_\ast|, |\sigma_\ast'|\in \mathcal{K}_n$, since each $(n-1)$-simplex in $\mathcal{K}$ is the primary face of uniquely determined two $n$-simplices in $\mathcal{K}$.
By the graph isomorphism of $G_\cap(\mathcal{K}^{n,n-1})$ to $D\circ G_a(\mathcal{K})$, the choice of $\tau_\ast$ corresponds to that of four  vertices corresponding to $|\sigma_\ast|$ and $|\sigma_\ast'|$ (with orientations), which is described in Figure \ref{fig-dup-num} as well as descriptions of $D\circ G_a(\mathcal{K})$ and $G_\ast$.

\begin{figure}[htbp]\em
\begin{minipage}{0.5\hsize}
\centering
\includegraphics[width=8.0cm]{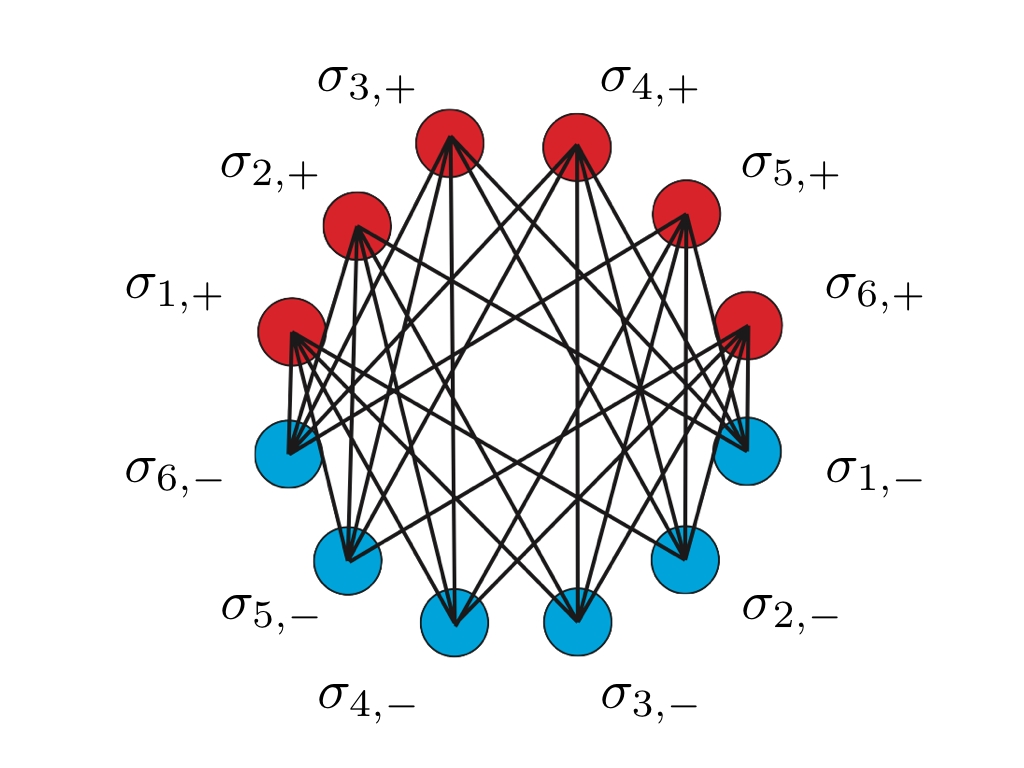}
(a)
\end{minipage}
\begin{minipage}{0.5\hsize}
\centering
\includegraphics[width=8.0cm]{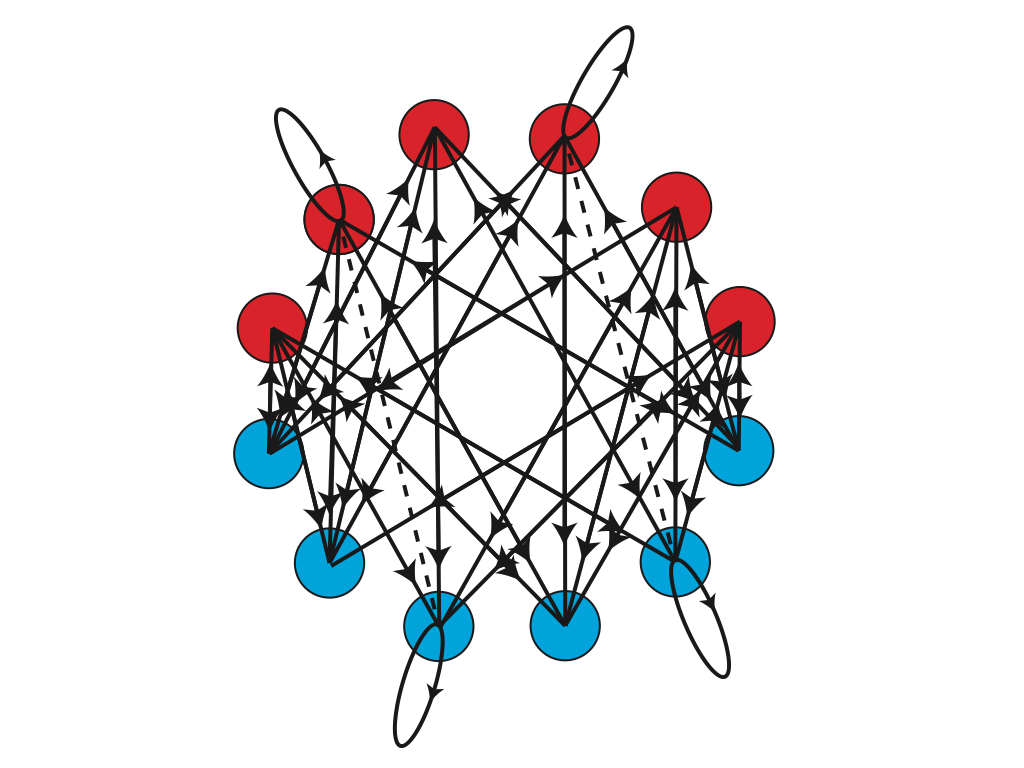}
(b)
\end{minipage}
\caption{Description of $D\circ G_a(\mathcal{K})$ associated with $\mathcal{K}$ in (\ref{main-cpx}) and $G_\ast$, $n=4$}
\label{fig-dup-num}
When $n=4$, then the $4$-dimensional simplicial complex $\mathcal{K}$ has six $4$-simplices, which corresponds to vertices of $D\circ G_a(\mathcal{K})$, six of which have non-contradicted orientations (red) and the others have the opposite ones (blue). 
Vertices are supposed to be uniformly distributed on the unit circle for visibility in Figure \ref{fig-search}.
The graph $D\circ G_a(\mathcal{K})$ is thus expressed as (a).
Now let the marked face $\tau_\ast$ be $\tau_\ast = |\sigma_2|\cap |\sigma_4|$, which is uniquely determined by the choice of two $n$-simplices.
Then cut the arcs connecting $\sigma_{2,\pm}$ and $\sigma_{4,\mp}$ and add self loops to these four vertices, which results in the construction of the deformed graph $G_\ast$ shown in (b).
Note that finding probability of walker at $\tau_\ast$ is attained as the sum of amplitudes on self loops.
\end{figure}

\par
Thirdly, set the initial state as the uniform state $\psi = \psi_{IN}$, where $\psi_{IN}$ is given in (\ref{initial-state}) with an identification with an element in $\ell^2(A(G_\ast))$.
Finally, compute the time evolution $\{\hat \Gamma^t_\ast \psi_{IN}\}_{t\geq 1}$.
Computation results are shown in Figures \ref{fig-search} - \ref{fig-time}, which follow the result described in Theorem \ref{thm-search}.

\begin{figure}[htbp]\em
\begin{minipage}{0.5\hsize}
\centering
\includegraphics[width=8.0cm]{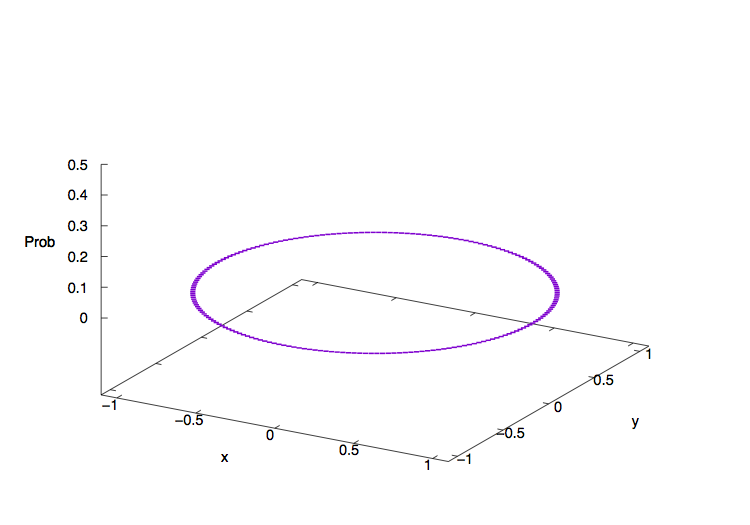}
(a)
\end{minipage}
\begin{minipage}{0.5\hsize}
\centering
\includegraphics[width=8.0cm]{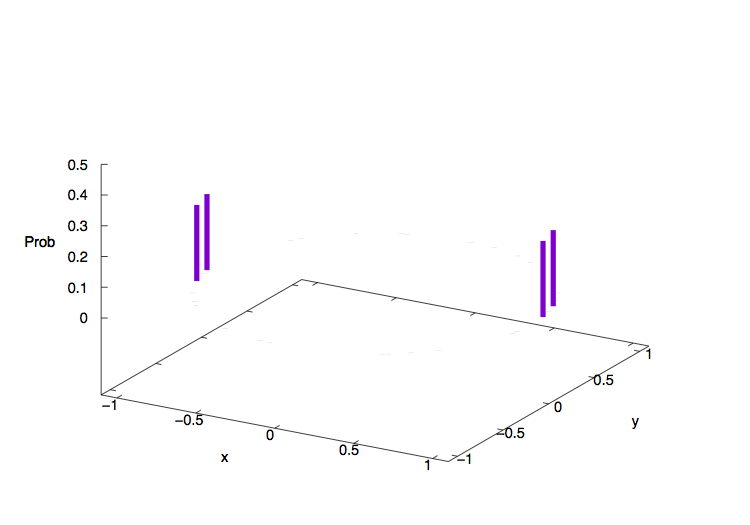}
(b)
\end{minipage}
\caption{Simplicial quantum search}
\label{fig-search}
(a) : Initial (uniform) state $\psi_{IN}$ of quantum search on $X(K_{n+2})^{(n)}$ consisting of $n$-simplices $\{|\sigma_i|\}_{i=1}^{n+2}$. 
Distribution of vertices is followed by Figure \ref{fig-dup-num}.
In these figures, $n+2$ is set to be $100$.
The marked simplex is $\tau_\ast = |\sigma_5|\cap |\sigma_{11}|$, which corresponds to four vertices on associated deformed graph $G_\ast$.
(b) : The evolved state $\hat \Gamma^t \psi_{IN}$ at $t = t_f \approx 55$.
Four peaks are observed at vertices corresponding to $\sigma_{5,\pm}$ and $\sigma_{11,\pm}$.
Finding probability at each vertex is close to $0.25$, and hence $p_f$ is close to 1.
This observation means that the marked simplex $\tau_\ast$ is found with probability $p_f\sim 1$ after the time $t_f$.
\end{figure}

\begin{figure}[htbp]\em
\begin{minipage}{0.5\hsize}
\centering
\includegraphics[width=8.0cm]{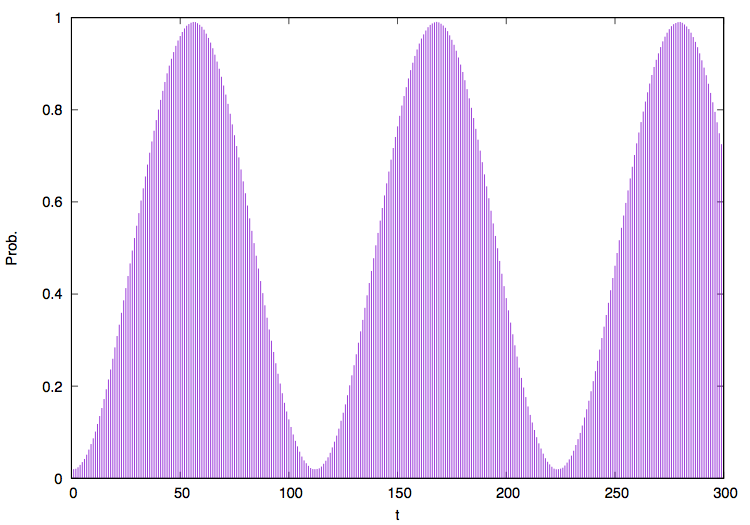}
(a)
\end{minipage}
\begin{minipage}{0.5\hsize}
\centering
\includegraphics[width=8.0cm]{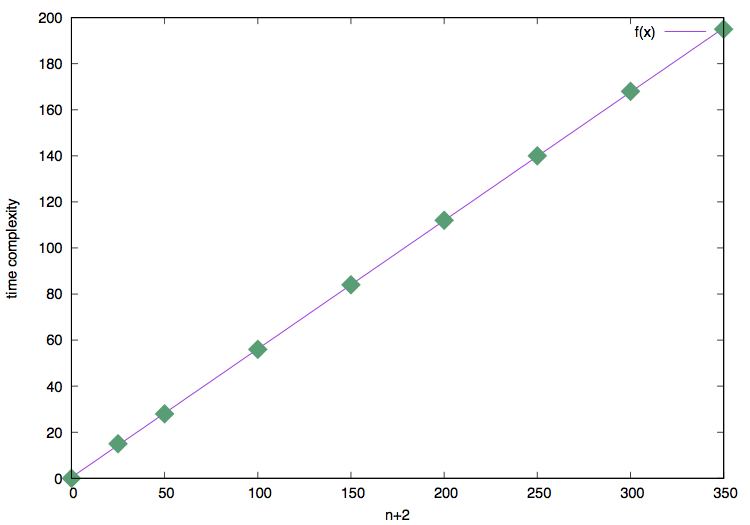}
(b)
\end{minipage}
\caption{Pseudo-periodicity and time complexity of simplicial quantum search}
\label{fig-time}
(a) : Finding probability $p_f$ of $\tau_\ast$ corresponding to four vertices with self loops drawn in Figure \ref{fig-search}-(b). 
The value $p_f$ periodically oscillates between $\approx 0.005$ and $\approx 0.99$.
(b) : plot of time complexity $t_f$ for $n+2= 50, 100, 150, 200, 250, 300, 350$ and the fitting function $f(x) = 0.5574x + 0.411667$ obtained by gnuplot. 
We can observe that $0.5574\approx \pi/(4\sqrt{2})$, which is the coefficient of the first order of the complexity in (\ref{time}). 
This graph indicates that $t_f = O(n)$, equivalently, $t_f = O(\sqrt{N})$ where $4N=2(n+1)(n+2)$.
\end{figure}

\section{Conclusion and further directions}
In this paper, we have discussed quantum search on simplicial complexes.
To this end, we have introduced simplicial quantum walks as an alternative to the original simplicial quantum walks introduced in \cite{MOSver1}.
The alternative simplicial quantum walks are unitary equivalent to bipartite walks on induced bipartite graphs.
Moreover, under geometric constraints on simplicial complexes, simplicial quantum walks driven by Grover-like operators are also equivalent to coined quantum walks on graphs with duplication structure.
These unitary equivalence yield the applicability of quantum search on graphs to search problems on simplicial complexes.
Indeed, we have proved that, via the unitary equivalence, the search algorithm on the $n$-dimensional simplicial complex $X(K_{n+2})^{(n)}$ introduced in Section \ref{section-search} finds a marked simplex with finding probability $p_f\sim 1$, which is independent of the dimension $n$, with the time complexity $O(n)=O(\sqrt{N})$. 
From the viewpoint of quantum search on graphs, the simplicial quantum search in our setting is regarded as the search of {\em four self loops} on the graphs simultaneously (not only either of them), which corresponds to a problem for the choice of one pairs of associated vertices among $O(N) = O(n^2)$ possibilities. 
Therefore our result gives an example that our equipments achieve quantum speed-up in searching problem over simplicial complexes.
\par
The result also reveals an insight to dynamics of coined quantum walks on graphs.
Namely, our simplicial quantum search problem turns out to be unitary equivalent to simultaneous search of two marked arcs on regular graphs.
We believe that the current study will open the door for connecting quantum search problems on graphs and topology.
\par
We end the paper leaving several future directions.

\subsection{Quantum walks and search problems on other complexes}
In this paper, we have only considered search problems on the simplicial complex $X(K_{n+2})^{(n)}$, $n\geq 2$.
As a sequel, it is thus natural to consider the dynamics and search problems on the following increasing sequence (namely, filtration) of skeletons:
\begin{align*}
K_{n+2} &= X(K_{n+2})^{(1)} \subset X(K_{n+2})^{(2)} \subset \cdots \\
	&\subset X(K_{n+2})^{(n-1)} \subset X(K_{n+2})^{(n)}\subset X(K_{n+2})^{(n+1)} = \text{(an $(n+1)$-simplex)}.
\end{align*}
In particular, the cases $\{X(K_{n+2})^{(\kappa)}\}_{\kappa = 2,\cdots, n-1}$ remain nontrivial.
In contrast to $X(K_{n+2})^{(n)}$, simplicial complexes $X(K_{n+2})^{(\kappa)}$ with $\kappa = 2,\cdots, n-1$ are not orientable since each primary face $\tau$ has junctions; namely, $\tau$ has more than two cofaces.
In particular, we cannot apply Theorem \ref{thm-equiv-duplicate} to these complexes for solving search problems.

\par
From the viewpoint of differential geometry, it is natural to consider simplicial complexes as triangulations of differentiable manifolds other than the unit sphere $S^n$, such as cylinder, torus (not $n$-dimensional lattices with periodic boundary conditions), M\"{o}bius band, Klein bottle and so on.
There are a lot of manifolds which are compact with no boundaries (namely, closed) and orientable. 
Theorem \ref{thm-equiv-duplicate} can be still applied to such manifolds, 
which indicates that exploring how these geometrical changes deform our result of Theorem~\ref{thm-search} is one of the interesting future's problems.
As for non-orientable manifolds like M\"{o}bius band, as in cases $\{X(K_{n+2})^{(\kappa)}\}_{\kappa = 2,\cdots, n-1}$,  the other feature of associated graph $S\circ D\circ G_{a}(\mathcal{K}) \bullet B_{n-1}$ should be concerned.

\subsection{Compatibility with ordinary coined walks}
SQW2 constructed in the present arguments is supposed to be defined on $n$-dimensional simplicial complexes with $n\geq 2$.
When $n=1$, as indicated in Remark \ref{rem-dim1}, coined walks are not always restored from SQW2.
SQW2 on graphs actually define coined walks on {\em double graphs}, in which sense the problem whether SQW2 is a natural extension of coined walks on graphs remains open.
In other words, it is a non-trivial problem when the double structure of SQW2 on graphs generates a difference from coined walks on the same graphs.
We leave a definition of simplicial quantum walks as a natural extension of coined walks on graphs which are efficient to study from both mathematical and computational viewpoints to be considered in future works.

\section*{Acknowledgments}
KM was partially supported by Program for Promoting the reform of national universities (Kyushu University), Ministry of Education, Culture, Sports, Science and Technology (MEXT), Japan, World Premier International Research Center Initiative (WPI), MEXT, Japan, and JSPS Grant-in-Aid for Young Scientists (B) (No. JP17K14235).
OO was partially supported by JSPS KAKENHI Grant (No. JP24540208, JP16K05227). 
ES acknowledges financial support from the Grant-in-Aid for Young Scientists (B) and of Scientific Research (B) Japan Society for the Promotion of Science (Grant No.~JP16K17637, No.~JP16K03939).
Finally authors would thank to reviewers for providing them helpful comments about contents of the paper.

\appendix
\section{Simplicial complexes : quick review}
\label{appendix-cpx}

We state a quick review of simplicial complexes for readers who are not familiar with them. See e.g. \cite{Z2005} for details.

\subsection{Definition}
\begin{definition}\rm
Let $\mathbb{R}^N$ be a Euclidean space and $O_N$ be the origin of $\mathbb{R}^N$. Let $a_0,a_1,\cdots, a_n \in \mathbb{R}^N$ be points so that $n$ vectors $\{\overrightarrow{a_0a_i}\}_{i=1}^n$ are linearly independent.
An {\em $n$-simplex} is a set $|\sigma| \subset \mathbb{R}^N$ given by
\begin{equation*}
|\sigma| = \left\{\sum_{i=1}^n \lambda_i \overrightarrow{a_0a_i}\mid \lambda_i \geq 0,\ \sum_{i=1}^n \lambda_i = 1\right\}.
\end{equation*}
We also write $|\sigma|$ as $|a_0a_1 \cdots a_n|$ if we write the dependence of points $\{a_i\}_{i=0}^n$ explicitly.
A $k$-{\em face} of an $n$-simplex $|\sigma| = |a_0a_1\cdots a_n|$ is a $k$-simplex $|\tau|$ generated by $k$ points in $\{a_0\}_{i=0}^n$. In such a case, $|\sigma|$ is called {\em a coface} of $|\tau|$.
An $(n-1)$-face of an $n$-simplex $\sigma$ is often called {\em a primary face} of $\sigma$.
\end{definition}
For example, for a given simplex $|\sigma| = |abc|$, edges $|ab|$, $|bc|$ and $|ca|$ are primary faces of $\sigma$. Also, vertices $|a|$, $|b|$ and $|c|$ are $0$-faces of $|\sigma|$. Finally, $|\sigma|$ is a coface of $|a|$, $|b|$, $|c|$, $|ab|$, $|bc|$ and $|ca|$.

\begin{definition}\rm
A {\em simplicial complex} $\mathcal{K}$ is the collection of simplices satisfying
\begin{itemize}
\item If $|\sigma| \in \mathcal{K}$, then all faces of $|\sigma|$ are also elements in $\mathcal{K}$.
\item If $|\sigma_1|, |\sigma_2| \in \mathcal{K}$ and if $|\sigma_1|\cap |\sigma_2| \not = \emptyset$, then $|\sigma_1|\cap |\sigma_2|$ is a face of both $|\sigma_1|$ and $|\sigma_2|$. 
\end{itemize}
For a given simplicial complex $\mathcal{K}$, the union of all simplices of $\mathcal{K}$ is the {\em polytope} of $\mathcal{K}$ and is  denoted by $\mathcal{K}$. A set $P$ is a {\em polyhedron} if it is the polytope $\mathcal{K}$ of a simplicial complex $\mathcal{K}$.
\end{definition}

Let $\mathcal{K} = \{\mathcal{K}_k\}_{k\geq 0}$ be a simplicial complex, where $\mathcal{K}_k = \{|\sigma| \in \mathcal{K} \mid |\sigma| \text{ is a $k$-simplex}\}$. If $n = \max \{k \mid \mathcal{K}_k \not = \emptyset\} < \infty$, then we call $\mathcal{K}$ an {\em $n$-dimensional} simplicial complex. 

\begin{definition}\rm
\label{dfn-strong-conn}
For a simplicial complex $\mathcal{K}$, a {\em facet} in $\mathcal{K}$ is a simplex $\sigma \in \mathcal{K}$ which is maximal with respect to the inclusion relation of sets. A simplicial complex $\mathcal{K}$ is {\em pure} if all facets in $\mathcal{K}$ have an identical dimension. An $n$-dimensional pure simplicial complex $\mathcal{K}$ is {\em strongly connected} if, for each $|\sigma|, |\tau| \in \mathcal{K}_n$, there is a sequence of $n$-simplices $\{|\sigma_j|\}_{j=0}^k$ with $|\sigma_0| = |\sigma|$ and $|\sigma_k| = |\tau|$ such that $|\sigma_{j-1}| \cap |\sigma_j|$ is a primary face of $|\sigma_{j-1}|$ and $|\sigma_j|$ for $j=1,\cdots, n$. 
\end{definition}

\begin{remark}\rm
We often call an $(n-1)$-face of an $n$-simplex $|\sigma|$ {\em a facet of $|\sigma|$}. It is completely different from facets in simplicial complexes.
\end{remark}
For a given simplicial complex $\mathcal{K}$, we can consider the sub-collection of simplices in $\mathcal{K}$ which itself is also a simplicial complex.
More precisely, let $\mathcal{L}$ be the following collection:
\begin{equation*}
\mathcal{L} := \{\sigma' \text{: simplex}\mid \exists \sigma\in \mathcal{K} \text{ such that }\sigma'\subset \sigma\} \subset \mathcal{K}.
\end{equation*}
If $\mathcal{L}$ itself is a simplicial complex, then $\mathcal{L}$ is called a {\em subcomplex} of $\mathcal{K}$.
In our main discussion, we consider the following type of subcomplexes.

\begin{definition}[Skeleton]\rm
Let $\mathcal{K}$ be an $n$-dimensional simplicial complex.
For $m=\{0,\cdots, n\}$, define a collection $\mathcal{K}^{(m)}$ by
\begin{equation*}
\mathcal{K}^{(m)} := \{\sigma \in \mathcal{K} \mid \dim \sigma \leq m\}.
\end{equation*}
It easily follows that $\mathcal{K}^{(m)}$ is a subcomplex of $\mathcal{K}$.
The complex $\mathcal{K}^{(m)}$ is called the {\em $m$-skeleton} of $\mathcal{K}$.
Obviously $\mathcal{K}^{(n)} = \mathcal{K}$ holds.
The $1$-skeleton $\mathcal{K}^{(1)}$ is a graph embedded in $\mathcal{K}$.
\end{definition}

\subsection{Orientability}
\label{section-orientability}
Let $\mathcal{K}$ be an $n$-dimensional strongly connected simplicial complex. 
Each $k$-simplex $|\sigma| = |v_0 \cdots v_k|$ determines {\em an orientation} as a choice of equivalent class determined by the equivalence relation $\sim_k$ (see Section \ref{section-setting} for details).
In particular, each simplex admits two orientations.
For an oriented simplex $\sigma$, the corresponding simplex $|\sigma|$ ignoring the orientation (namely, the order of vertices) is often called {\em the support } of $\sigma$.
For $k=0,1,\cdots, n$, let $\langle \mathcal{K}_k \rangle$ be the set of oriented $k$-simplices in $\mathcal{K}$; namely, each $k$-simplex $\sigma\in \langle \mathcal{K}_k \rangle$ is distinguished not only the support $|\sigma|$ but also orientations on $|\sigma|$.
It is then natural to consider the relationship of orientations on adjacent simplices, which induces a discussion of {\em global} orientability of simplicial complexes.
See Section \ref{section-setting} for details of $\mathcal{K}^{n,n-1}$.

\begin{definition}\rm
We say a pair of $n$-simplices $\sigma$ and $\sigma'$ is {\em non-contradicted} if and only if 
\begin{equation*}
 |\sigma|\neq |\sigma'| \mathrm{\;and\;} \exists \tau,\tau'\in \bra \mathcal{K}_{n-1} \ket 
\mathrm{\;with\;|\tau|=|\tau'|,\;\tau\neq \tau'\;}
\mathrm{\;s.t.,\;} 
(\sigma,\tau),(\sigma',\tau')\in \mathcal{K}^{n,n-1},
\end{equation*}
namely, $\sigma$ and $\sigma'$ has a common primary face $|\tau|$ whose induced orientation is opposite to each other.
\end{definition}
The above definition need no requirements for $(n-1)$-simplices with just one coface.
The collection of such $(n-1)$-simplices is referred to as {\em the boundary} $B_{n-1} = B_{n-1}(\mathcal{K})$ of $\mathcal{K}$.
Remark that each $(n-1)$-simplex has a coface since $\mathcal{K}$ is strongly connected, in particular pure.
\par
Since there exist two elements of $\bra \mathcal{K}_n \ket$ whose supports are commonly $|\sigma|$, 
we take the labeling to each element of $\bra \mathcal{K}_n \ket$ by $(|\sigma|,\epsilon)$, where $\epsilon\in \{\pm \}$. 
Remark that the labeling way of $\{\pm \}$ to each element of $\bra \mathcal{K}_n \ket$ is arbitrary in the present stage; indeed there are $2^{|\mathcal{K}_n|}$ choices. 
If the choice of orientations on all simplices are non-contradicted, then the {\em orientation on simplicial complexes} makes sense.
More precisely, the (global) orientation of simplicial complexes is defined as follows.

\begin{definition}\rm
Let $\mathcal{K}$ be an $n$-dimensional strongly connected simplicial complex. 
If there exists a sequence of orientations of $n$-simplices $(\epsilon_{|\sigma|})_{|\sigma|\in \mathcal{K}_n}\in \{\pm \}^{\mathcal{K}_n}$ such that all pairs
$\{ (|\sigma|,\epsilon_{|\sigma|}), (|\sigma|',\epsilon_{|\sigma|'}') \}$ with $|\sigma|\cap |\sigma|'\in \mathcal{K}_{n-1}\setminus \{\emptyset\}$ are non-contradicted, then we call the simplicial complex $\mathcal{K}$ {\em orientable}.
\end{definition}

\begin{figure}[htbp]\em
\begin{minipage}{0.32\hsize}
\centering
\includegraphics[width=6.0cm]{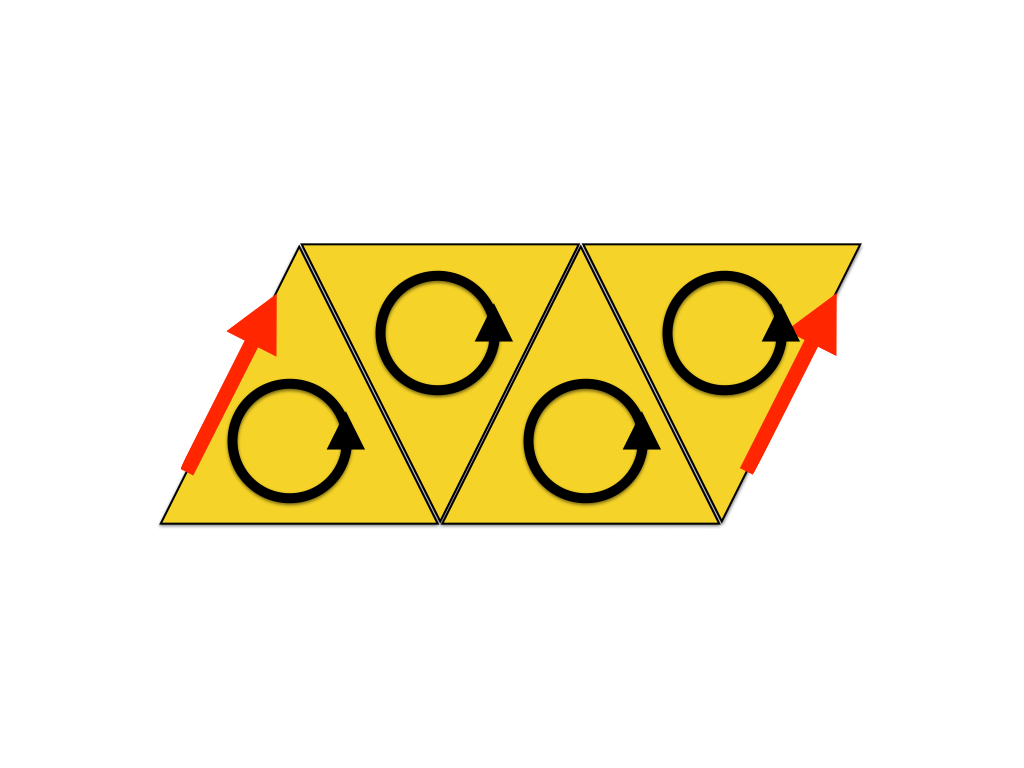}
(a)
\end{minipage}
\begin{minipage}{0.32\hsize}
\centering
\includegraphics[width=6.0cm]{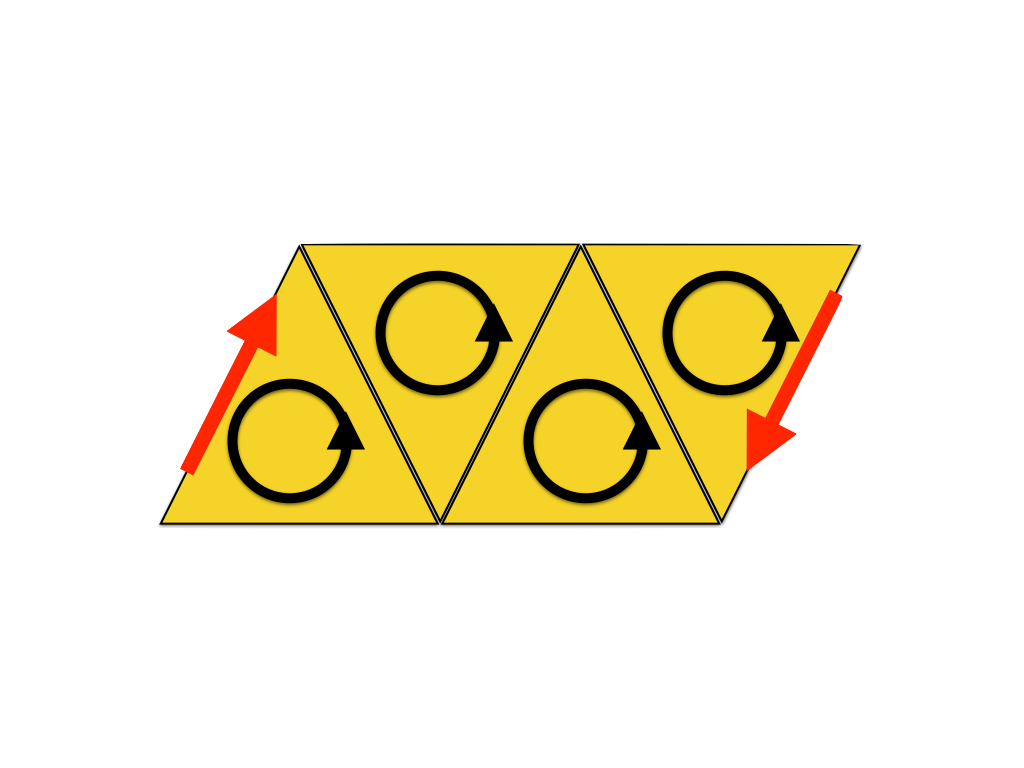}
(b)
\end{minipage}
\begin{minipage}{0.32\hsize}
\centering
\includegraphics[width=6.0cm]{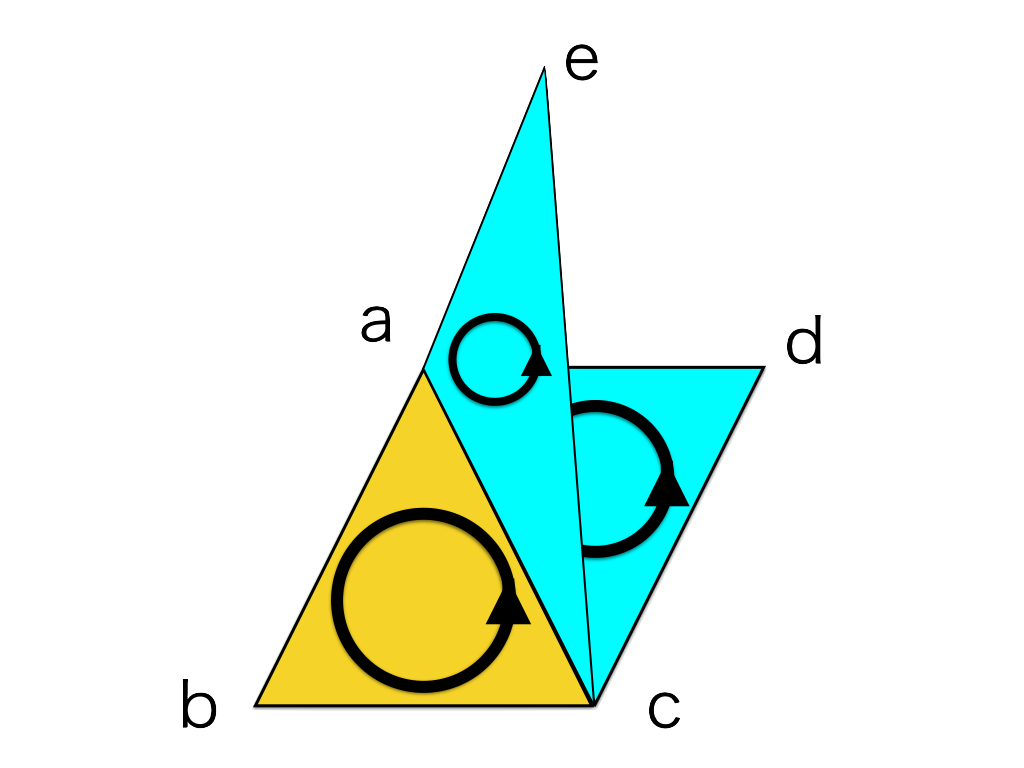}
(c)
\end{minipage}
\caption{Orientation of simplicial complexes}
\label{fig-orientation}
(a) : Two arrows mean the identification of edges with orientation. 
In this case, orientations of all adjacent simplices are non-contradicted, and hence the complex is orientable.
(b) : Unlike the case (a), the adjacent simplices including arrows induce the same orientation on the edge with arrow.
It can be the case for any adjacent simplices, and hence the complex is not orientable.
(c) : The complex has a junction at $|ac|$.
For a given orientation on $|abc|$, the non-contradicted orientation would be set on $|acd|$ and $|ace|$ shown in (c). 
However, such orientations induce the identical orientation on $|ac|$ both from $|acd|$ and $|ace|$.
This is always the case in the presence of junction like $|ac|$, and hence the complex is not orientable.
\end{figure}

Orientability of simplicial complexes is illustrated in Figure \ref{fig-orientation}.
If $\mathcal{K}$ is orientable and $(\epsilon_{|\sigma|})_{|\sigma|\in \mathcal{K}_n}$ gives the non-contradiction, then 
$(-\epsilon_{|\sigma|})_{|\sigma|\in \mathcal{K}_n}$ also automatically gives the non-contradictoriness.
On the other hand, such the choice only gives the possibility of the sequence justifying the orientability of $\mathcal{K}$.
We say a choice of one sequence from $(\pm\epsilon_{|\sigma|})_{|\sigma|\in \mathcal{K}_n}$ {\em an orientation on $\mathcal{K}$}.
Therefore, if $\mathcal{K}$ is orientable, then it has just two orientations.
The consequence of orientability for $\mathcal{K}$ is an analogue of orientability of differentiable manifolds.

\subsection{Clique complexes}
Typical simplicial complexes are often constructed by triangulation of differentiable manifolds, as shown in \cite{MOSver1}.
On the other hand, there is a way to construct simplicial complexes from given graphs.
Such complexes are called {\em clique complexes} of graphs.
More precisely, they are defined as follows.
\begin{definition}[Clique complex]\rm
Let $G=(V,E)$ be a graph.
Then construct a simplex $\sigma$ associated with $G$ as the following one-to-one correspondence:
\begin{align*}
&K_{m+1} = (\{v_i\}_{i=0}^m)\subset G : \text{ complete subgraph of $G$ with $m+1$ vertices in $V$}\\
&\quad \Leftrightarrow \quad \sigma = |v_0\cdots v_m| : \text{ an $m$-simplex}.
\end{align*}
Since the complete graph $K_{m+1}$ contains complete subgraphs $K_l$ for $l=2,\cdots, m$, then it easily follows that the collection
of simplices $X(G) := \{\sigma \mid \text{ made by the above correspondence }\}$ has a structure of simplicial complex.
We call the simplicial complex $X(G)$ {\em the clique complex} of $G$.
\end{definition}
An example of clique complexes is shown in Figure \ref{fig-clique}.

\begin{figure}[htbp]\em
\begin{minipage}{0.5\hsize}
\centering
\includegraphics[width=5.0cm]{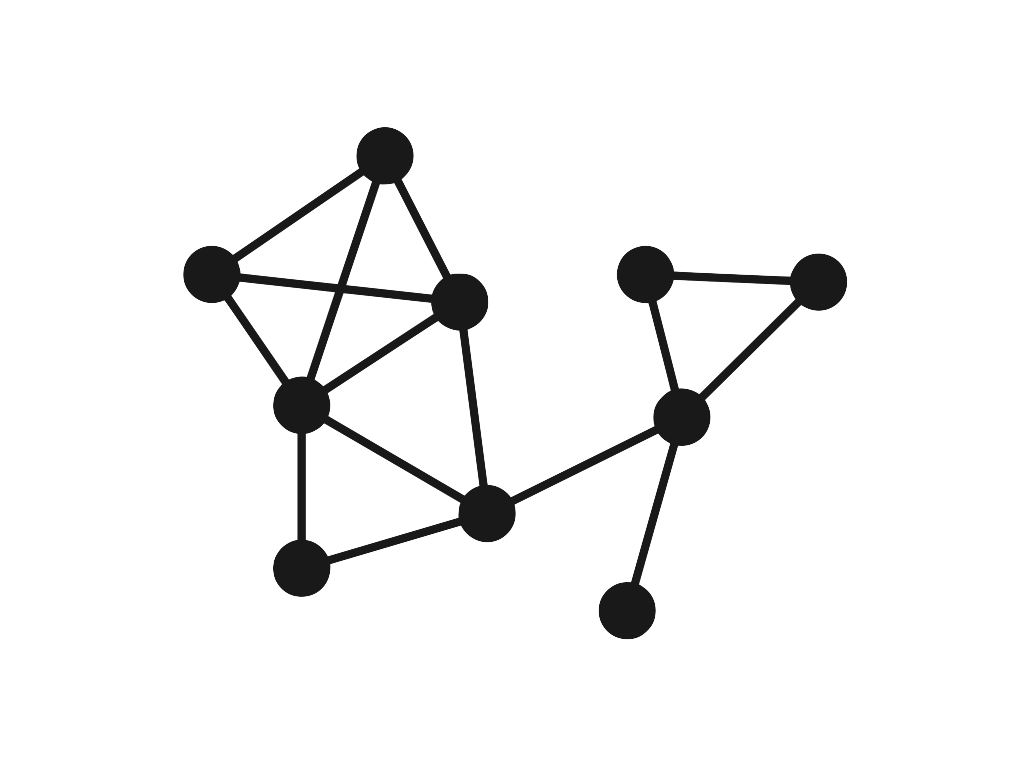}
(a)
\end{minipage}
\begin{minipage}{0.5\hsize}
\centering
\includegraphics[width=5.0cm]{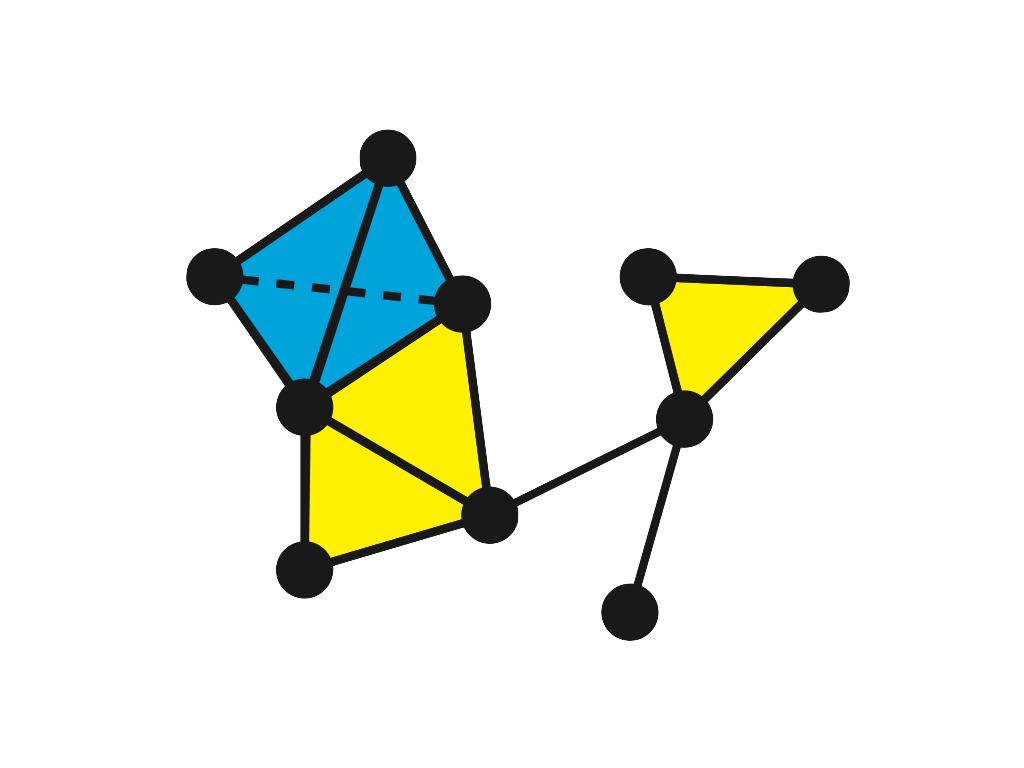}
(b)
\end{minipage}
\caption{A graph and its clique complex}
\label{fig-clique}
(a) : a graph $G$. This graph consists of one $K_4$, seven $K_3$, fifteen $K_2$ (edges) and ten $K_1$ (vertices).
(b) : the clique complex $X(G)$ of $G$, which consists of one $3$-simplex, seven $2$-simplices, fifteen $1$-simplices and ten $0$-simplices.
\end{figure}

\section{Spectral analysis of $T_\ast$}
\label{appendix-spectral}
Here we give detailed proofs of Lemmas \ref{lem-spectral}, \ref{IN} and \ref{Tar}, which are followed from standard arguments in quantum search problems (e.g., \cite{P2013} or references). 

\subsection{Proof of Lemma \ref{lem-spectral}}
We put $T_\ast-\lambda I=: M_\lambda$ induced by $\mathcal{K}^{n,n-1}$. The matrix size is $2(n+2)\times 2(n+2)$. 
Recall that $G_\ast$ is a bipartite graph with four self loops. 
We set the vertex set $X\sqcup Y$ of $G_\ast$ by $X=\{x_1,x_2,\dots,x_{n+2}\}$ and $Y=\{y_1,y_2,\dots,y_{n+2}\}$,
where, without the loss of generality via the change of base elements, $x_1, x_2$ and $y_1,y_2$ are assumed to have self loops. 
We fix the computational basis of $T_\ast$ is $\{\delta_{x_1},\dots, \delta_{x_{n+2}},\delta_{y_1},\dots, \delta_{y_{n+2}}\}$ by this order. 
Thus the matrix expression of $M_\lambda$ is described as follows: 
\begin{equation*}
M_\lambda = \begin{bmatrix} A_\lambda & B \\ B & A_\lambda \end{bmatrix},
\end{equation*}
where $A_\lambda$ and $B$ are $(n+2)\times (n+2)$ matrices such that 
\begin{equation*}
A_\lambda=\begin{bmatrix} 
        \alpha &        &       &        &        \\
               & \alpha &       &        &        \\
               &        & \beta &        &        \\
               &        &       & \ddots &        \\
               &        &       &        & \beta 
        \end{bmatrix}, \quad
B= \frac{1}{n+1}
\begin{bmatrix} 
        0 & 0 & 1 & 1 & \cdots & 1  \\ 
        0 & 0 & 1 & 1 & \cdots & 1  \\ 
        1 & 1 & 0 & 1 & \cdots & 1  \\ 
        1 & 1 & 1 & 0 & \cdots & 1  \\
        \vdots & \vdots & \vdots & \vdots & \ddots & \vdots  \\
        1 & 1 & 1 & 1 & \cdots  & 0 
\end{bmatrix} 
\end{equation*}
with 
\begin{equation*}
\alpha=-\frac{1}{n+1}-\lambda,\quad \beta=-\lambda.
\end{equation*}
From now on, we assume $\alpha,\beta\neq 0$, that is, $\lambda\neq -1/(n+1),0$, respectively.
Letting $f,g\in\mathbb{C}^{n+2}$, under this assumption, we have 
\begin{align}\label{f}
        \begin{bmatrix} f \\ g \end{bmatrix}\in \ker(M_\lambda) \notag 
        	& \Leftrightarrow \begin{bmatrix} f \\ g \end{bmatrix} \in \ker \begin{bmatrix} I & A_\lambda^{-1}B \\ 0 & I-(A_\lambda^{-1}B)^2 \end{bmatrix} \notag \\
                & \Leftrightarrow f=-A_\lambda^{-1}Bg,\;g\in \ker(I-A_\lambda^{-1}B)+\ker(I+A_\lambda^{-1}B).
\end{align}
The matrix $I+A_\lambda^{-1}B$ is expressed by
\begin{equation*}
 I+A_\lambda^{-1}B = \begin{bmatrix}  I_2 & R \\ L & M_\lambda'' \end{bmatrix},
\end{equation*}
where
\begin{align*} 
        R &=\frac{\alpha^{-1}}{n+1}\begin{bmatrix} 1 & \cdots & 1 \\ 1 & \cdots & 1 \end{bmatrix},\;
        L  = \frac{\beta^{-1}}{n+1}\begin{bmatrix} 1 & 1 \\  \vdots & \vdots \\ 1 & 1 \end{bmatrix}, \\
        M_\lambda''&=\frac{\beta^{-1}}{n+1}J_n +\left( 1-\frac{\beta^{-1}}{n+1} \right)I_n,
\end{align*}
$I_2$ is the $2$-dimensional identity matrix and $J_n$ is the $n$-dimensional all $1$ matrix. 
Letting $h_1\in \mathbb{C}^2$ and $h_2\in \mathbb{C}^n$, we have 
\begin{align}
\label{g}
g\in \ker(I+A_\lambda^{-1}B) \notag 
        	& \Leftrightarrow \begin{bmatrix} h_1 \\ h_2 \end{bmatrix} \in \ker \begin{bmatrix} I_2 & R \\ 0 & LR-M_\lambda'' \end{bmatrix} \notag \\
                & \Leftrightarrow h_1=-Rh_2,\;h_2\in \ker(LR-M_\lambda'')
        \end{align}
The matrix $LR-M_\lambda''$ is expressed by 
	\[ LR-M_\lambda''= x J_n + yI_n, \]
where 
	\[ x=\frac{\lambda+3/(n+1)}{\lambda+1/(n+1)}\; \frac{1}{(n+1)\lambda},\;y=-\frac{\lambda+1/(n+1)}{\lambda}. \] 
If $x\neq 0$, that is, $\lambda\neq -3/(n+1)$, then 
\begin{align}\label{h}
        h_2\in \ker(LR-M_\lambda'') \notag \ 
        	&\Leftrightarrow J_n h_2=-y/x h_2 \notag \\
                &\Leftrightarrow -y/x\in \{n,0\}, \notag \\
                &\qquad\qquad h_2\in \spann \{ \begin{bmatrix} 1 & \cdots & 1 \end{bmatrix}^T \}+ \spann \{ \begin{bmatrix} 1 & \cdots & 1 \end{bmatrix}^T \}^{\perp} \notag \\
                &\Leftrightarrow \lambda\in \{\lambda_{\pm}\}, \;h_2= \begin{bmatrix} 1 & \cdots & 1 \end{bmatrix}^T, 
\end{align}
where 
	\[\lambda_{\pm}:=\frac{n-2\pm \sqrt{n^2+8n}}{2(n+1)}\]
Since $-y/x=0$ if and only if $\lambda=-1/(n+1)$, the third equivalence holds under the assumption $\alpha\neq 0$.  
Therefore the spectrum of $T_\ast$ includes $\lambda_{\pm}$ $(\lambda_+\geq \lambda_-)$, and 
the candidates of the eigenvalues are $-3/(n+1)$ and $-1/(n+1)$ in the present stage. 
We consider $\ker(I-A_\lambda^{-1}B)$ in the same way as the case of $\ker(I+A^{-1}B)$. 
Then in this case, we can state that the spectrum of $T_\ast$ includes $-1$, and 
the candidates of the eigenvalues are $1/(n+1)$ and $-1/(n+1)$. 
In any cases, it is easily check that the largest one is $\lambda_\ast$, which implies $\lambda_\ast$ is the largest eigenvalue of $T_\ast$. 
The corresponding eigenvector is obtained by (\ref{f}), (\ref{g}) and (\ref{h}). 
This completes the proof.

\subsection{Proof of Lemma \ref{IN}}
\label{section-IN}
Let $\tilde f_1 := f_1 / \|f_1\|$. 
Then
\begin{equation*}
\beta_-(a) = -\im \frac{\tilde f_1(t(a))}{\sqrt{\deg(t(a))}} = \frac{- \im }{\sqrt{n+1}} \frac{f_1(t(a))}{\|f_1\|},\quad \psi_{IN}(a) = \frac{1}{\sqrt{2(n+1)(n+2)}}.
\end{equation*}
Therefore
\begin{align*}
\langle \psi_{IN}, \beta_- \rangle &= \frac{- \im }{\sqrt{n+1}} \frac{1}{\sqrt{2(n+1)(n+2)} \|f_1\|} \sum_{a} f_1(t(a))\\
	&=  \frac{- \im }{\sqrt{n-1}} \frac{1}{\sqrt{2(n+1)(n+2)} \|f_1\|}\left( \sum_{a:t(a)\text{ has a self loop}} + \sum_{a : \text{otherwise}} \right) f_1(t(a))\\
	&=  \frac{-\im /\|f_1\| }{\sqrt{n+1}\sqrt{2(n+1)(n+2)} } \left\{ 4(n+1)\eta + 2n(n+1)\right\}\\
	&= \frac{-2\im }{\sqrt{2(n+2)}\|f_1\|}(2\eta + n).
\end{align*}
By characterizations of $\|f_1\|$ and $\eta$, we have $\|f_1\|\sim \sqrt{2n}\sim \sqrt{2(n+2)}$ and $\eta \sim 1$ as $n\to \infty$, and hence
\begin{equation*}
\frac{-2\im }{\sqrt{2(n+2)}\|f_1\|}(2\eta + n) \sim \frac{-\im }{n+2}(2+n) = -\im,
\end{equation*}
which completes the proof.

\subsection{Proof of Lemma \ref{Tar}}
\label{section-Tar}
By definition, we have
\begin{equation*}
\beta_+(a) = \frac{1}{|\sin \theta_1|}\frac{1}{\sqrt{n+1}}\times
	\begin{cases}
	\left\{\tilde f_1(o(a)) - \cos \theta_1 \tilde f_1(t(a))\right\} & \text{$a\not \in \text{(self loop)}$},\\
	(1+\cos \theta_1) \tilde f_1(o(a)) & \text{$a \in \text{(self loop)}$}.
\end{cases}
\end{equation*}
Now
\begin{align*}
\cos\theta_1 &= \frac{n-2 + \sqrt{n(n+8)}}{2(n+1)}\\
	&= \frac{1}{2(n+1)} \left\{n-2 + n\sqrt{1+\frac{8}{n}}\right\}\\
	&= \frac{1}{2\left(1+\frac{1}{n}\right)} \left\{1-\frac{2}{n} + \sqrt{ 1+\frac{8}{n} }\right\}.
\end{align*}
Letting $x = 2/n$ and 
\begin{equation*}
g(x) := \frac{1}{2+x}\left\{1- x + \sqrt{ 1+4x }\right\},
\end{equation*}
we have the asymptotic behavior of $g(x)$ near $x=0$ as follows:
\begin{equation*}
g(x) = 1-x^2 + O(x^3)\quad \text{ as }x\to 0,
\end{equation*}
which yields $\cos \theta_1 = 1+O(n^{-2})$ as $n\to \infty$.
Thus 
\begin{align*}
\sin \theta_1 &= \sqrt{1-\cos^2\theta_1}\\
	&\sim \sqrt{1-(1-x^2)^2} \sim \sqrt{(1-1+x^2)(1+1-x^2)}\\
	&\sim |x|\sqrt{2-x^2} \sim \sqrt{2}|x| = \frac{2\sqrt{2}}{n}.
\end{align*}
If $a$ is an endpoint of a self loop, we have
\begin{equation*}
\tilde f_1(o(a)) = \frac{1}{\|f_1\|}\eta \sim \frac{1}{\sqrt{2n}}
\end{equation*}
and hence
\begin{align*}
\langle \psi_{Tar}, \beta_+\rangle &= \frac{1}{2} \sum_{a\in \text{(self loop)}} \beta_+ (a)\\
	&\sim \frac{1}{2} \times \frac{1}{|\sin \theta_1|}\frac{1}{\sqrt{n+1}}(1+\cos \theta_1) \tilde f_1(o(a)) \times 4\\
	&\sim \frac{n}{2\sqrt{2}} \frac{2}{\sqrt{n+1}} \frac{1}{\sqrt{2n}} \times 2\\
	&= 1+o(1),
\end{align*}
which completes the proof.
\end{document}